\theoremstyle{plain}
\declaretheorem[name=Theorem]{theorem}
\newtheorem{lemma}[theorem]{Lemma}
\theoremstyle{definition}
\newtheorem{definition}[theorem]{Definition}
\newtheorem{observation}[theorem]{Observation}
\theoremstyle{remark}
\newtheorem{claim}{Claim}
\newcommand{\np}{\mathsf{NP}}
\newcommand{\rp}{\mathsf{RP}}
\newcommand{\I}{\mathcal{I}}
\newcommand{\C}{\mathcal{C}}
\newcommand{\A}{\mathcal{A}}
\newcommand{\cs}{\mathscr{S}}\usepackage{mathrsfs}
\newcommand{\cl}{{\mathscr{L}}}
\newcommand{\Op}{\ensuremath{\cal O}}
\newcommand{\cb}{\mathscr{B}}
\newlength{\RoundedBoxWidth}
\newsavebox{\GrayRoundedBox}
\newenvironment{GrayBox}[1]%
   {\setlength{\RoundedBoxWidth}{.93\textwidth}
    \def\boxheading{#1}
    \begin{lrbox}{\GrayRoundedBox}
       \begin{minipage}{\RoundedBoxWidth}}%
   {   \end{minipage}
    \end{lrbox}
    \begin{center}
    \begin{tikzpicture}%
       \node(Text)[draw=black!20,fill=white,rounded corners,%
             inner sep=2ex,text width=\RoundedBoxWidth]%
             {\usebox{\GrayRoundedBox}};
        \coordinate(x) at (current bounding box.north west);
        \node [draw=white,rectangle,inner sep=3pt,anchor=north west,fill=white] 
        at ($(x)+(6pt,.75em)$) {\boxheading};
    \end{tikzpicture}
    \end{center}}     
\newenvironment{defproblemx}[2][]{\noindent\ignorespaces%
                                \FrameSep=6pt%
                                \parindent=0pt%
                \vspace*{-1.5em}
                \ifthenelse{\isempty{#1}}{%
                  \begin{GrayBox}{\textsc{#2}}%
                }{%
                  \begin{GrayBox}{\textsc{#2} parameterized by~{#1}}%
                }
                \begin{tabular*}{\textwidth}{@{\hspace{.1em}} >{\itshape} p{1.8cm} p{0.8\textwidth} @{}}%
            }{
                \end{tabular*}%
                \end{GrayBox}%
                \ignorespacesafterend
            }  
\author[1]{Sayan Bandyapadhyay\thanks{sayan.bandyapadhyay@gmail.com}}
\affil[1]{Department of Informatics, University of Bergen, Norway}
\date{}
\begin{document}
\title{On Perturbation Resilience of Non-Uniform $k$-Center} 
\maketitle

\begin{abstract}
The Non-Uniform $k$-center (NUkC) problem has recently been formulated by Chakrabarty, Goyal and Krishnaswamy [ICALP, 2016] as a generalization of the classical $k$-center clustering problem. In NUkC, given a set of $n$ points $P$ in a metric space and non-negative numbers $r_1, r_2, \ldots , r_k$, the goal is to find the minimum dilation $\alpha$ and to choose $k$ balls centered at the points of $P$ with radius $\alpha\cdot r_i$ for $1\le i\le k$, such that all points of $P$ are contained in the union of the chosen balls. They showed that the problem is $\np$-hard to approximate within any factor even in tree metrics. On the other hand, they designed a ``bi-criteria'' constant approximation algorithm that uses a constant times $k$ balls. Surprisingly, no true approximation is known even in the special case when the $r_i$'s belong to a fixed set of size 3. In this paper, we study the NUkC problem under perturbation resilience, which was introduced by Bilu and Linial [Combinatorics, Probability and Computing, 2012]. We show that the problem under 2-perturbation resilience is polynomial time solvable when the $r_i$'s belong to a constant sized set. However, we show that perturbation resilience does not help in the general case. In particular, our findings imply that even with perturbation resilience one cannot hope to find any ``good'' approximation for the problem. 
\end{abstract}

\section{Introduction}
\textit{Stability} is a popular notion, which has been used in literature in the context of \textit{beyond worst case analysis}. The general idea is to impose extra constraints on the inputs such that the (stable) instances that satisfy those constraints can capture the instances that appear in real life applications. In other words, we would like to exclude the ``unrealistic'' instances from consideration and obtain optimistic bounds for algorithms on the remaining inputs. For example, a major collection of work along this line have focused on designing polynomial time algorithms for $\np$-complete problems under different stability conditions. Bilu and Linial \cite{BiluL12} introduced a notion of stability, which they termed as $\psi$-\textit{perturbation resilience} for some $\psi > 1$. Informally, an instance is called $\psi$-perturbation-resilient if the optimal solution remains same even after the instance is perturbed by a factor of $\psi$.   

%
Recently, researchers have shown sufficient interest in studying geometric clustering problems under perturbation resilience. An instance of a clustering problem is  $\psi$-perturbation-resilient if the optimal clustering is unique and remains unchanged under $\psi$-factor perturbation of the input distances. Awasthi {\em et al.} \cite{AwasthiBS12} showed that the standard center based clustering problems (e.g. $k$-center, $k$-median) can be solved in polynomial time under $\psi$-perturbation-resilience for $\psi\ge 3$. In any such center based clustering problem, the clustering is obtained by assigning a point to its nearest center. In other words, the clustering is induced by the Voronoi partition of the points w.r.t. the chosen centers. Subsequently, Balcan and Liang \cite{BalcanL16} designed a polynomial time algorithm for these clustering problems under $\psi$-perturbation-resilience for $\psi\ge 1+\sqrt{2}$, improving the bound of Awasthi {\em et al.} \cite{AwasthiBS12}. Later, Balcan {\em et al.} \cite{BalcanHW16} improved the bound for $k$-center to 2. On the other hand, they showed that $k$-center under $\psi$-perturbation-resilience cannot be solved in polynomial time for $\psi < 2$, unless $\np=\rp$. They also considered the more general asymmetric $k$-center problem, where the distances are not necessarily symmetric (but satisfy triangle inequality). The problem is known to not admit a constant approximation unless $\np \subseteq \textsf{DTIME}(n^{\log \log n})$, where $n$ is the input size \cite{ChuzhoyGHKKKN05}. Surprisingly, Balcan {\em et al.} \cite{BalcanHW16} showed that asymmetric $k$-center under $2$-perturbation-resilience can be solved in polynomial time. Angelidakis {\em et al.} \cite{AngelidakisMM17} gave a generic polynomial time algorithm for clustering problems with center based objectives (e.g. $k$-center, $k$-median, $k$-means) under $2$-perturbation-resilience. Recently, Cohen-Addad and Schwiegelshohn \cite{Cohen-AddadS17} proved that a simple local search scheme yields optimal solutions for problems like $k$-median and $k$-means, under $\psi$-perturbation-resilience for $\psi > 3$. Chekuri and Gupta \cite{ChekuriG18} showed that an LP relaxation of $k$-center under $2$-perturbation-resilience admits an integral solution. They also proved the same result for $k$-center with outliers. Balcan and Liang \cite{BalcanL16} introduced a weaker stability assumption called $(\psi,\epsilon)$-perturbation-resilience, where the optimal solution under $\psi$-perturbation can differ in at most $\epsilon$ fraction of the points from the original optimal clustering (see Preliminaries for the formal definition). Assuming that each cluster contains more than $2\epsilon n$ points, Balcan {\em et al.} \cite{BalcanHW16} showed that $k$-center under $(3,\epsilon)$-perturbation-resilience can be solved in polynomial time, where $n$ is the number of input points. 

The increasing interest in studying perturbation resilient clustering has given rise to several open directions. One such interesting direction is to study clustering problems, where the clustering is not necessarily induced by Voronoi partition. One such clustering problem is \textit{Non-Uniform $k$-center} (NUkC). In NUkC, we are given a set of $n$ points $P$ in a metric space, non-negative integers $r_1, r_2, \ldots , r_k$, and the goal is to find the minimum dilation $\alpha$ and to choose $k$ balls centered at the points of $P$ with radius $\alpha\cdot r_i$ for $1\le i\le k$, such that all points of $P$ are contained in the union of the chosen balls. We refer to any feasible solution of this problem composed of the chosen balls as a \emph{feasible placement}. From a feasible placement, a clustering is retrieved in the following way -- each point is assigned to a fixed ball that contains the point, and then for each ball, the points that are assigned to that ball form a cluster. Figure \ref{fig:nonvoronoi} shows that, the optimal clustering for an instance of NUkC is not the same as the Voronoi partition w.r.t. the centers of the balls in the optimal placement. The NUkC problem was formulated by Chakrabarty~{\em et al.}~\cite{ChakrabartyGK16} as a generalization of the well-studied $k$-center clustering problem, where all $r_i$'s are same. Apart from clustering, NUkC has several applications in vehicle routing, sensor placement, and so on. For example, in vehicle routing, we need to find $k$ depot locations corresponding to $k$ vehicles having different speeds, such that any customer can be served by some vehicle as quickly as possible. 

\begin{figure}[t]
\centering
\includegraphics[width=.25\linewidth]{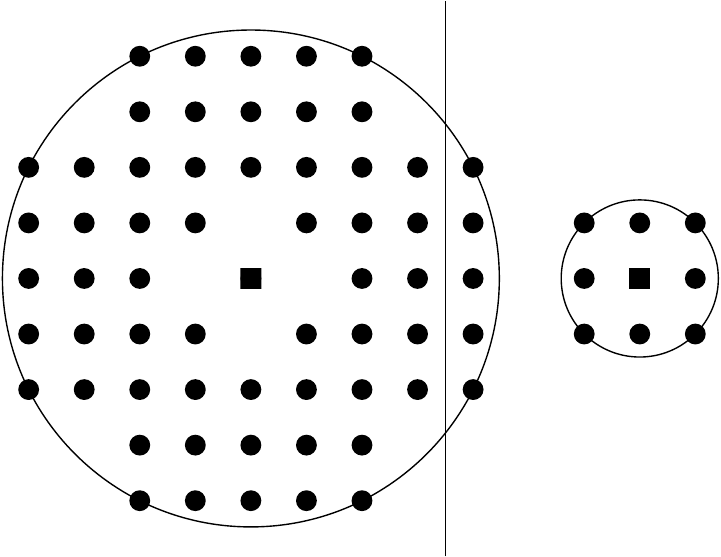}
\caption{The optimal clusters are contained in the two disks in the Euclidean plane. The centers of balls are shown by boxes. The Voronoi partition w.r.t. the centers contains two subsets of points lying on the different sides of the vertical bisector line.}
\label{fig:nonvoronoi}
\end{figure}

As mentioned before, $k$-center is a special case of NUkC where all the input radii are equal. We call this version of the problem as NUkC with one radius class. In general, all the radii might not be equal. But, we can consider only distinct radii from the input and associate a multiplicity parameter $k_i$, with each such radius $r_i$, which denotes the number of balls of radius $r_i$ that can be opened. 
Then the problem can be formulated equivalently in the following way. 
\begin{definition}
 (NUkC with $t$ radii classes) Given a set of $n$ points $P$ in a metric space, $t \le k$ distinct radii $r_1 > r_2 > \ldots > r_t$ and non-negative integers $k_1,\ldots,k_t$ such that $\sum_{i=1}^t k_i=k$, the goal is to find the minimum dilation $\alpha$ and to choose $k_i$ balls centered at the points of $P$ with radius $\alpha\cdot r_i$ for all $1\le i\le t$, such that the union of the chosen balls contains all the input points. 
\end{definition}

We note that $k$-center with outliers is a special case of NUkC with 2 radii classes where the radius $r_2=0$. Using a reduction from the Firefighters problem \cite{AdjiashviliBZ17}, Chakrabarty~{\em et al.} (Theorem 2 in \cite{ChakrabartyGK16}) proved that NUkC is $\np$-hard to approximate within any constant factor even in tree metrics. In fact, their construction proves $c$-inapproximability of the problem for any $c$, not necessarily a constant. On the other hand, they designed a $(c_1,c_2)$ bi-criteria approximation for the problem for large constants $c_1$ and $c_2$, i.e., if the algorithm is allowed to use $c_1\cdot k_i$ balls of type $i$ (thus $c_1\cdot k$ in total), it can produce a solution with dilation at most $c_2$ times the optimal dilation. 
They also gave a $1+\sqrt{5}$-approximation for NUkC with two radii classes. For $k$-center with outliers, they gave an improved 2-approximation. However, even when the number of distinct radii is 3, no true approximation is known.


The motivation behind the study of NUkC under perturbation resilience is that, in many 
applications, the distance function is heuristic. In fact, when the points represent
structures like images, proteins, documents, etc., it is very hard to find the true distance function, and various standard
distance/dissimilarity measures are used. If one solves a clustering problem with such a heuristic
distance function and expects good results, then they implicitly assume that the optimal solution
of the problem is not sensitive to small perturbations of the distance function. The perturbation resilience
condition is a natural way to
make this implicit assumption precise. And, the separation between the clusters forces an optimal clustering to be unique. 

\paragraph{Our results.} In this paper, we obtain the following results. 

\begin{enumerate}
 \item \textit{Polynomial time exact algorithm for NUkC with a constant number of radii classes under ``$2$-perturbation-resilience'' and ``$(3,\epsilon)$-perturbation-resilience when each cluster contains more than $\epsilon n+1$ points''.} Our algorithm reduces the NUkC problem to a version of Firefighters problem on trees (formally defined in Section \ref{sec:constantclasses}). Under the stability assumptions, we can show that a feasible solution of NUkC maps to a feasible solution of Firefighters problem and vice versa. Here, in particular, we use the ``well-separated'' structure of the clusters in the optimal clustering that follows due to stability. The reduction has the property that if NUkC has $t$ distinct radii classes, then the height of the constructed tree instance is $t+1$. Then we show that using a dynamic programming based scheme the Firefighters problem can be solved in polynomial time for constant height tree instances. Thus we also obtain a polynomial time algorithm for NUkC under perturbation resilience with a constant number of radii classes. We note that the algorithms for center based clustering problems in \cite{AngelidakisMM17,BalcanL16,ChekuriG18} are also based on tree computation and dynamic programming. However, the structure of the tree we compute is very different.
 We also note that our result under $2$-perturbation-resilience is tight, as even for $k$-center it is unlikely to obtain a polynomial time algorithm under $\psi$-perturbation-resilience for $\psi < 2$. To prove the result for $(3,\epsilon)$-perturbation-resilience, we assume that each cluster contains more than $\epsilon n+1$ points. We note that such a lower bound is necessary, as in its absence even $k$-center is $\np$-hard \cite{BalcanHW16} under $(\psi,\epsilon)$-perturbation-resilience for all $\psi \ge 1$ and $\epsilon > 0$. 
 
 \item \textit{$\gamma$-inapproximability for NUkC under ${\psi}$-perturbation-resilience for any $\gamma > 1$ and $\psi\le \gamma$,
  unless $\np=\rp$.} 
 Our result implies that, for any $\psi,\gamma > 1$, even with $\psi$-perturbation-resilience one cannot hope to find a $\gamma$-approximation for the problem. Our result should be contrasted with the polynomial time algorithm for asymmetric $k$-center under $2$-perturbation-resilience, as asymmetric $k$-center is another candidate problem which is hard to approximate within a constant factor. To prove the result, we use a chain of reductions starting from the satisfiability problem to the NUkC problem in tree metrics under perturbation resilience assumption. 
The last reduction in the chain is from a version of the Firefighters problem which shows that NUkC is hard to approximate within a factor of $\gamma$ in tree metrics for any $\gamma$. 
Our reduction is similar to the reduction in \cite{ChakrabartyGK16}. Then, we argue that the constructed tree instances of NUkC are ${\gamma}$-perturbation-resilient, and hence the similar hardness follows even for NUkC under ${\gamma}$-perturbation-resilience. We also extend this hardness result to Euclidean metric of dimension $d$ for $d \ge 1$ using a classical tree embedding result of Gupta \cite{Gupta00}.   
 
\end{enumerate}

The main contribution of this paper is twofold. The first one is to be able to establish an exact connection between NUkC under perturbation resilience and the Firefighters problem on trees. To establish this connection, we need to prove that perturbation resilience implies that the optimal clusters are ``well-separated''. Similar properties have been proved in the context of other problems (e.g., $k$-center). Our contribution is to be able to extend these proofs for NUkC as well. However, the extension is non-trivial, and one need sufficiently good amount of extra work, as here we need to deal with non-uniform radii. We note that Chakrabarty~{\em et al.} \cite{ChakrabartyGK16} also showed a reduction from NUkC to Firefighters. However, their LP-aware reduction is very different. Our second contribution is the tight hardness result for the problem. This result along with the polynomial time algorithm gives the complete picture for NUkC under perturbation resilience. To prove this result we are faced with the following challenges. In any such hardness construction, one needs to show that the instances of NUkC to which we map are perturbation resilient. Thus, we need to show that these instances have unique optimal solution and the optimal solution does not change with some perturbation of the distances. Chakrabarty~{\em et al.} \cite{ChakrabartyGK16} showed a reduction from Firefighers to NUkC. However, using their distance function it is not straightforward to show that the constructed instances are insensitive to the perturbation of distances. Nevertheless, we consider a similar distance function and show the reduction works out well with this modification. To prove the uniqueness of the optimal solutions, we reduce a ``unique'' version of 3SAT to a ``unique'' version of Firefighters using a chain of reductions.  

\paragraph{Related work and Open questions.} Other optimization problems have also been studied under stability assumptions \cite{AngelidakisMM17,deshpande19a,FriggstadKS19,MakarychevMV14,MihalakSSW11}. Also different stability assumptions have been introduced and well-studied in the literature \cite{AwasthiBS10,KumarK10,OstrovskyRSS12}. Most of the clustering problems mentioned here are $\np$-hard, but admit some constant approximations, e.g., see \cite{AhmadianNSW17,ByrkaPRST17,Gonzalez85,HochbaumS85} and the references therein. It would be interesting to see if one can obtain a constant approximation for NUkC with a constant number of radii classes without any perturbation resilience assumptions. Also, one can study similar hard clustering problems (e.g., $k$-clustering \cite{BandyapadhyayV16}) under perturbation resilience. 

\paragraph{Organization.} In Section \ref{sec:prelims}, we define some notations that we use throughout the paper, and make a few observations that will be useful later. In Section \ref{sec:properties}, we list some properties implied due to perturbation resilience of the input instances. Then in Section \ref{sec:constantclasses}, we discuss the algorithm for NUkC with any constant number of classes and prove its correctness by using the properties proved in the previous section. Lastly, in Section \ref{sec:hardness}, we prove the hardness results for the general problem. The proofs of lemmas and theorems
marked with ($*$) appear in the Appendix.


\section{Preliminaries}\label{sec:prelims}
We denote an instance of NUkC with $t$ radii classes on metric $d$ by $(P,d,t)$. Note that the radii ($r_i$) and multiplicity ($k_j$) parameters remain implicit in this notation. But, references to these parameters will become clear from the context. A ball with center $p\in P$ and radius $r$, denoted by $B(p,r)$, is the set of points $\{q\in P\mid d(p,q)\le r\}$. A set of balls \textit{covers} a set of points if the union of the balls contains all the points. Recall that a feasible placement is a feasible solution of the problem composed of the chosen balls that cover all the input points. A feasible NUkC \textit{clustering} $\C$ of the input set of points $P$ is a partition $\{C_1,\ldots,C_k\}$, such that there is a feasible placement $\Pi$ with the property that for all $i$, $C_i$ is a subset of a ball in the placement. We say that the clustering $\C$ is \textit{induced} by the placement $\Pi$. The radius of a cluster $C$ w.r.t. any distance function $d$, denoted by c-radius$(C,d)$, is $\min_{p\in P} \max_{q\in C} d(p,q)$. Note that no ball centered at a point $p\in P$ of radius smaller than c-radius$(C,d)$ can cover all the points of $C$. 
For a placement with dilation $\alpha$, a ball with radius $\alpha r_i$ (resp. $<\alpha r_i$ and $\ge \alpha r_i$) is called an $r_i$ (resp. $<r_i$ and $\ge r_i$) -ball. 

Consider a metric space $P$ with metric $d: P\times P \rightarrow {\mathbb{R}}_{\ge 0}$. A metric $d_1$ is called a $\psi$-perturbation of $d$ if for any $p,q \in P$, $d(p,q)/\psi \le  d_1(p,q) \le d(p,q)$\footnote{One can also define $\psi$-perturbation by both increasing and decreasing the distances - the two definitions are equivalent modulo some factor, as one can always scale the input distances appropriately.}. 
In this paper, all perturbations we consider satisfy the metric properties. 

\begin{definition}
 An instance $\mathcal{I}=(P,d,t)$ of NUkC is called $\psi$-perturbation-resilient ($\psi$-PR) if for any metric $\psi$-perturbation $d_1$ of $d$, the unique optimal NUkC clustering of $\mathcal{I}'=(P,d_1,t)$ is identical to the unique optimal clustering of $\mathcal{I}$.
\end{definition}

Note that in general, optimal clustering of NUkC might not be unique. We refer to the instance $\I'$ as a $\psi$-perturbed instance of $\I$. A few examples demonstrating the definition of perturbation resilience w.r.t. NUkC with $t$ radii classes are shown in Appendix \ref{ap:examples}. 
We also consider another notion of perturbation resilience introduced by Balcan and Liang \cite{BalcanL16}, where the optimal clustering is allowed to be different by a few points when the distances are perturbed. Here we rewrite this notion in terms of NUkC. Two clusterings $\C=\{C_1,\ldots,C_k\}$ and $\C'=\{C_1',\ldots,C_k'\}$ are called $\epsilon$-close if at most $\epsilon n$ points are clustered differently in the two clusterings, i.e., the minimum value of $\sum_{i=1}^k |C_i\setminus C_{f(i)}'|$ over all permutations $f$ of $\{1,2,\ldots,k\}$ is at most $\epsilon n$. 

\begin{definition}
 An instance $\mathcal{I}=(P,d,t)$ of NUkC is called $(\psi,\epsilon)$-perturbation-resilient ($(\psi,\epsilon)$-PR) if for any metric $\psi$-perturbation $d_1$ of $d$, any optimal NUkC clustering of $\mathcal{I}'=(P,d_1,t)$ is $\epsilon$-close to any optimal clustering of $\mathcal{I}$.
\end{definition}


This is again a well-studied stability criterion \cite{AgarwalJP15}. Note that when $\epsilon=0$, any optimal NUkC clustering of $\mathcal{I}'$ must be same as any optimal clustering of $\mathcal{I}$. This implies that optimal clustering of $\mathcal{I}$ and $\mathcal{I}'$ are unique and we obtain the definition of $\psi$-PR. Thus, if an instance of NUkC is $\psi$-PR, then it is also $(\psi,0)$-PR, and hence any hardness result for NUkC under $\psi$-PR trivially follows for NUkC under $(\psi,\epsilon)$-PR. Now, we have the following simple observation, which will be useful later in proving the properties of the PR instances. 
%
%

\begin{observation}
[$*$]
\label{obs:uniquecluster}
 Consider an NUkC instance $\mathcal{I}=(P,d,t)$ that admits a unique optimal clustering $\Op$. Let $C$ be any cluster in $\Op$. Also, consider an optimal placement $\Pi$ where $C$ is covered by a ball $B$. Then, the following two properties hold.
 \begin{itemize}
  \item The center $p$ of the ball $B$ must belong to $C$. 
  \item For any two points $u,v$ that lie in two different clusters of $\Op$, both of $u,v$ cannot be contained in $B$.
 \end{itemize}
\end{observation}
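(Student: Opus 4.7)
The plan is to prove both properties by contradiction, leveraging the uniqueness of the optimal clustering $\Op$. In each case, I would assume the property fails and then construct an alternative clustering $\Op'$ that is still induced by the same placement $\Pi$ (hence optimal with the same dilation) but differs from $\Op$, contradicting uniqueness.

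For the first property, suppose toward contradiction that $p \notin C$, so that $p$ belongs to a different cluster $C' \in \Op$, and let $B' \in \Pi$ be the ball covering $C'$. Form $\Op'$ by moving $p$ from $C'$ to $C$: replace $C$ by $C \cup \{p\}$, replace $C'$ by $C' \setminus \{p\}$, and keep all other clusters unchanged. I would then check that $\Op'$ is still induced by $\Pi$: the ball $B$ contains $C \cup \{p\}$ because $C \subseteq B$ by hypothesis and $p \in B$ (it is the center of $B$), and $B'$ still contains $C' \setminus \{p\}$ since $C' \subseteq B'$. Hence $\Op'$ is a feasible NUkC clustering with the same dilation as $\Op$, so it is also optimal; but $p$ lies in different clusters in $\Op$ and $\Op'$, so the two clusterings are distinct, contradicting uniqueness.

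For the second property, the argument is essentially the same move-a-point trick. Suppose $u, v$ lie in two different clusters of $\Op$ and both $u, v \in B$. Since they lie in different clusters, at most one of them is in $C$; say $u \notin C$. Let $C''$ denote the cluster of $u$ in $\Op$ and let $B'' \in \Pi$ be the ball covering $C''$. Define $\Op'$ by moving $u$ from $C''$ to $C$. As before, $B$ covers $C \cup \{u\}$ because $u \in B$ and $C \subseteq B$, while $B''$ still covers $C'' \setminus \{u\}$. Thus $\Op'$ is an optimal clustering distinct from $\Op$, again contradicting uniqueness.

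The main obstacle, such as it is, is careful bookkeeping to ensure that the perturbed clustering is valid: $\Op'$ must remain a partition of $P$ into $k$ (possibly empty) parts and must be induced by the placement $\Pi$, which is immediate once one observes that removing a point from a cluster never violates the containment in its assigned ball, and adding a point already inside $B$ to $C$ keeps $C \cup \{p\}$ (respectively $C \cup \{u\}$) inside $B$. Conceptually, the first property is actually a corollary of the second (take $v$ to be any point of $C$ and set $u := p$), but I would present them separately since the statement does so.
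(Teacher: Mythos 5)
Your proof is correct and follows essentially the same "move a point between clusters to contradict uniqueness" argument as the paper; the only cosmetic difference is that for the second property the paper moves both $u$ and $v$ into $C$ while you move only the one not already in $C$, which is equally valid. Your observation that the first property follows from the second is a nice additional remark that the paper does not make explicit.
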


WLOG we can assume that the optimal dilation of a $\psi$-PR or a $(\psi,\epsilon)$-PR instance of NUkC is 1. Like in the general case without perturbation resilience, in this case also the assumption can be introduced by scaling $r_i$ values by a guessed value of the optimal dilation $\alpha$. 

%
\begin{lemma}
[$*$]
\label{lem:dila1}
 Suppose there is a polynomial time algorithm $\A$ for the NUkC problem with $t$ radii classes under $\psi$-PR (resp. $(\psi,\epsilon)$-PR) with the properties that (i) for an instance which admits a feasible placement of balls with dilation 1, $\A$ returns ``yes'' and a feasible clustering, 
 and (ii) for an instance which does not admit a feasible placement of balls with dilation 1, $\A$ returns ``no''. Then, the NUkC problem with $t$ radii classes under $\psi$-PR (resp. $(\psi,\epsilon)$-PR) can be solved in polynomial time.
\end{lemma}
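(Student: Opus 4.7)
The plan is to reduce the optimization version of NUkC (finding the minimum dilation) to the decision version solved by $\A$ (testing feasibility at dilation $1$) by enumerating a polynomial-size set of candidate values for $\alpha^\ast$. First, I would observe that $\alpha^\ast$ must belong to the set $D = \{d(p,q)/r_i : p,q \in P,\ 1 \le i \le t\}$, which has size $O(n^2 t)$. Indeed, in any optimal placement some input point must lie exactly on the boundary of some chosen ball (otherwise all radii could be shrunk slightly while preserving feasibility, contradicting minimality of $\alpha^\ast$). Next, I would enumerate the values of $D$ in increasing order; for each candidate $\alpha$, form the \emph{scaled} instance $\mathcal{I}_\alpha = (P,d,t)$ whose radii are $\alpha r_1,\ldots,\alpha r_t$ (with the same multiplicities $k_i$), run $\A$ on $\mathcal{I}_\alpha$, and return the smallest $\alpha$ for which $\A$ answers ``yes,'' together with the clustering it outputs. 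Binary search on $D$ would reduce the number of calls to $O(\log n + \log t)$, but linear search already suffices for polynomial time.

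Correctness rests on two observations. First, uniform scaling of radii preserves perturbation resilience: a placement of $\mathcal{I}_\alpha$ with dilation $\beta$ is in bijection with a placement of $\mathcal{I}$ with dilation $\alpha\beta$, so the optimal clustering of $\mathcal{I}_\alpha$ coincides with that of $\mathcal{I}$, and the same correspondence applied to any $\psi$-perturbation $d_1$ of $d$ shows that $(P,d_1,t)$ under the scaled radii and under the original radii share their optimal clustering. Consequently, if $\mathcal{I}$ is $\psi$-PR (resp.\ $(\psi,\epsilon)$-PR), so is $\mathcal{I}_\alpha$ for every $\alpha > 0$. Second, the optimal dilation of $\mathcal{I}_\alpha$ equals $\alpha^\ast/\alpha$; hence for $\alpha = \alpha^\ast$, $\mathcal{I}_\alpha$ is PR and admits a feasible placement with dilation $1$, so condition (i) guarantees that $\A$ returns ``yes'' together with a feasible clustering; whereas for $\alpha < \alpha^\ast$, no feasible dilation-$1$ placement of $\mathcal{I}_\alpha$ can exist (it would pull back to a dilation-$\alpha$ placement of $\mathcal{I}$, contradicting minimality of $\alpha^\ast$), so by (ii), $\A$ returns ``no.''

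Combining these, the smallest $\alpha \in D$ at which $\A$ answers ``yes'' is exactly $\alpha^\ast$, and the returned clustering is induced by a placement of $\mathcal{I}$ with dilation $\alpha^\ast$. The main (and only mildly technical) obstacle is the first observation: one has to verify that after rescaling the radii the resulting instance still satisfies the definition of PR (resp.\ $(\psi,\epsilon)$-PR), which is essentially bookkeeping once one notes the bijective correspondence between placements of $\mathcal{I}$ and $\mathcal{I}_\alpha$ that scales dilations by a factor of $\alpha$. Since this bijection is metric-independent, it applies equally well to $(P,d)$ and to any $\psi$-perturbation $(P,d_1)$, and the PR hypothesis on $\mathcal{I}$ transfers to $\mathcal{I}_\alpha$ without change.
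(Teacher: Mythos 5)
Your proposal is correct and follows essentially the same strategy as the paper: guess the optimal dilation $\alpha$, scale the radii by $\alpha$ to reduce to the dilation-$1$ decision problem, and argue that scaling the radii preserves $\psi$-PR (resp.\ $(\psi,\epsilon)$-PR) via the dilation-scaling bijection between placements. The one small addition you make is spelling out the candidate set $D=\{d(p,q)/r_i\}$ (with the implicit caveat that $r_i>0$), whereas the paper simply notes that a finite metric yields polynomially many possible optimal dilations; otherwise the two arguments are the same.
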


\section{Properties of Perturbation Resilience}\label{sec:properties}

In this section, we show that perturbation resilience imposes useful structure on the optimal solution. 
First, we consider the instances under $(\psi,\epsilon)$-perturbation resilience with $\psi=3$ and prove an interesting property of the optimal clustering. 

\begin{lemma}\label{lem:well-separatedrobust}
 Consider any optimal placement $\Pi$ for a $(3,\epsilon)$-PR NUkC instance ${\I}=(P,d,t)$ with optimal dilation 1 where the size of each optimal cluster is $> \epsilon n+1$. Let $C_1$ and $C_2$ be two clusters induced by two balls of $\Pi$ with radii $r_i$ and $r_j$, respectively with $r_i \ge r_j$. Then, for any $p\in C_1$ and $q\in C_2$, $d(p,q) > r_i$. 
\end{lemma}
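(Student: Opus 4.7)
The plan is to argue by contradiction. Suppose there exist $p \in C_1$ and $q \in C_2$ with $d(p, q) \le r_i$, and let $c_1, c_2$ denote the centers of the $r_i$-ball covering $C_1$ and the $r_j$-ball covering $C_2$ in $\Pi$, respectively. I would build a concrete $3$-perturbation $d_1$ of $d$ and exhibit a clustering $\C'$ that is an optimal clustering of $(P, d_1, t)$ but differs from the original optimal in more than $\epsilon n$ positions, contradicting $(3,\epsilon)$-perturbation-resilience.

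First, a few triangle-inequality estimates in $d$. Since $p \in C_1 \subseteq B(c_1, r_i)$ we have $d(c_1, p) \le r_i$, and for every $y \in C_2$,
\[
d(p, y) \le d(p, q) + d(q, y) \le r_i + 2 r_j \le 3 r_i,
\]
using $r_j \le r_i$. So every point of $C_1 \cup C_2$ lies within $d$-distance $3 r_i$ of $p$. Next I would define $d_1$ via the Angelidakis--Makarychev--Makarychev construction anchored at $p$: set $\phi(x) = d(p, x)/3$ and
\[
d_1(x, y) = \min\bigl(d(x, y),\ \phi(x) + \phi(y)\bigr).
\]
A short case analysis shows $d_1$ is a metric, $d(x, y)/3 \le d_1(x, y) \le d(x, y)$, and $d_1(p, y) = d(p, y)/3$ for every $y$. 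Combined with the previous estimate, $d_1(p, y) \le r_i$ for every $y \in C_1 \cup C_2$, so the $r_i$-ball centered at $p$ in the metric $d_1$ already contains $C_1 \cup C_2$.

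Form the placement $\Pi'$ by relocating the $r_i$-ball of $\Pi$ from $c_1$ to $p$ and keeping every other ball of $\Pi$ in place; since $d_1 \le d$ the remaining balls still cover their original clusters under $d_1$. Induce from $\Pi'$ the clustering
\[
\C' = \bigl\{\,C_1 \cup (C_2 \setminus \{c_2\}),\ \{c_2\},\ C_3,\ \ldots,\ C_k\,\bigr\},
\]
assigning $C_1 \cup (C_2 \setminus \{c_2\})$ to the $r_i$-ball at $p$, $\{c_2\}$ to the $r_j$-ball at $c_2$, and every other cluster to its original ball. Then $\C'$ is feasible in $(P, d_1, t)$ at dilation at most $1$; and by the cluster-size hypothesis $|C_2| > \epsilon n + 1$, the minimum over permutations in the $\epsilon$-closeness definition is at least $|C_2| - 1 > \epsilon n$, so $\C'$ is not $\epsilon$-close to the original optimal clustering.

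The final step is to certify that $\C'$ is an optimal clustering of $(P, d_1, t)$, and this is where I expect the main technical obstacle to lie. The plan is to show that the optimal dilation of $(P, d_1, t)$ equals $1$. The upper bound is immediate since $\Pi$ itself is feasible in $d_1$ at dilation $1$. For the matching lower bound, I would take any placement in $d_1$ at dilation $\alpha^* < 1$, classify each ball--point incidence as either direct ($d_1(c, y) = d(c, y)$) or shortcut ($d_1(c, y) = \phi(c) + \phi(y) < d(c, y)$), and argue, via the triangle inequality $d(c, y) \le d(p, c) + d(p, y)$ together with a careful reassignment of points to balls (possibly swapping the largest ball to be centered at $p$), that such a placement can be converted into a placement in $d$ with dilation strictly below $1$, contradicting the assumption that the optimum in $d$ is $1$. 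Once $\C'$ is certified optimal in $d_1$, the $(3,\epsilon)$-PR hypothesis forces $\C'$ to be $\epsilon$-close to the original optimum, contradicting the count established above; hence $d(p, q) > r_i$ for all such pairs.
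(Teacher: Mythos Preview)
Your overall architecture matches the paper's exactly: assume $d(p,q)\le r_i$, exhibit a metric $3$-perturbation, produce a feasible clustering at the optimal dilation that merges almost all of $C_1\cup C_2$ into one part, and contradict $(3,\epsilon)$-PR via the cluster--size lower bound. Your construction of $\C'$ and the $\epsilon$-closeness count are fine (the paper separates out $\{c_1\}$ rather than $\{c_2\}$, but either works).

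The gap is precisely where you flagged it. Your lower-bound plan --- take a placement of dilation $\alpha^*<1$ in $d_1$, split incidences into ``direct'' and ``shortcut'', and reassemble a $d$-placement of dilation $<1$ --- does not go through for the AMM perturbation, and not just for want of care. A shortcut incidence $d_1(c,y)=\phi(c)+\phi(y)\le \alpha^*\rho$ only gives $d(p,c)+d(p,y)\le 3\alpha^*\rho$; in particular $d(p,y)$ can be as large as $3\alpha^*\rho$. Your suggested ``swap the largest ball to $p$'' would then need $3\alpha^*\rho\le \alpha^* r_1$, i.e.\ $\rho\le r_1/3$, which fails whenever the ball using a shortcut is itself of class $r_1$ (or any class with $\rho>r_1/3$). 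More fundamentally, because the AMM map divides \emph{every} distance from $p$ by $3$, the optimal dilation of $(P,d_1,t)$ can genuinely drop below $1$; there is no purely metric argument that rules this out.

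The paper closes this gap with two coupled choices you are missing. First, it uses a different perturbation: triple every edge, but cap the edges from $p$ into $(C_1\cup C_2)\setminus\{c_1\}$ at $3r_i$, then take the shortest-path metric (and rescale by $1/3$ at the end). Second --- and this is the key idea --- it invokes $(3,\epsilon)$-PR \emph{inside} the lower-bound step, not only at the end. With that perturbation, if a shortest path realizing some center--to--point distance in a sub-optimal-dilation placement ever uses one of the capped edges, then the containing ball is forced to contain all of $(C_1\cup C_2)\setminus\{c_1\}$; the induced optimal clustering of the perturbed instance is therefore not $\epsilon$-close to $\Op$, contradicting PR. Hence no shortest path uses a capped edge, all relevant distances are exactly $3d$, and a sub-$3$ dilation in the perturbed metric yields a sub-$1$ dilation in $d$. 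In short: you must both tailor the perturbation so that ``using a shortcut'' forces one ball to swallow $C_1\cup C_2$, and spend the PR hypothesis there rather than saving it for the final line.
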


\begin{proof}
 Let $\Op$ be an optimal clustering of $\I$ that is induced by $\Pi$ and contains $C_1, C_2$ as clusters. For the sake of contradiction, suppose there are two points $p\in C_1$ and $q\in C_2$ such that $d(p,q) \le r_i$. Then, we show that there is a $3$-perturbation $d'$ of $d$ such that an optimal clustering of $\I'=(P,d',t)$ is not $\epsilon$-close to $\Op$. But, this gives a contradiction to the assumption that $\I$ is a $(3,\epsilon)$-PR instance, and hence the lemma follows. 
 
 To construct the $3$-perturbation $d'$ of $d$, we at first construct another metric $d_1$. Later we will scale $d_1$ to construct $d'$. Let $B_1=B(c_1,r_i)$ and $B_2=B(c_2,r_j)$ be the balls in $\Pi$ that induce $C_1$ and $C_2$, respectively. Then, for any $s\in C_2$, $d(p,s) \le d(p,q)+d(q,s)\le r_i+2r_j\le 3r_i$. Also, for any $w\in C_1$, $d(p,w) \le 2r_i$. 
 First, we construct a complete graph $G$ with vertex set equal to $P$, and for any edge $(u,v)$, its length is defined by the function $l$ as follows. 
 
 \[l(u,v) =
\left\{
	\begin{array}{ll}
		3r_i  & \mbox{if } u=p, v \in (C_1\cup C_2)\setminus \{c_1\}  \text{ and } d(u,v)\ge r_i\\
		3\cdot d(u,v) & \mbox{otherwise} 
	\end{array}
\right.\]

The distance $d_1$ is the shortest path metric on $G$. Note that, as mentioned before, for any $v\in (C_1\cup C_2)\setminus \{c_1\}$, $d(p,v) \le 3r_i$. Thus, it is not hard to see that, for any $u,v \in P$, $d(u,v)\le d_1(u,v) \le 3\cdot d(u,v)$. Now, let us define the metric $d'$. For any two points $u,v$, $d'(u,v)=d_1(u,v)/3$. Hence, for any $u,v \in P$, $d(u,v)/3\le d'(u,v) \le d(u,v)$. It follows that $d'$ is a metric $3$-perturbation of $d$, and thus the optimal clustering of $\I'$ is $\epsilon$-close to $\Op$. 

Now, let $\I_1=(P,d_1,t)$ and ${\Op}_1$ be an optimal clustering of $\I_1$. As scaling does not change optimality of a clustering (for a formal proof see the proof of Lemma \ref{lem:dila1}), ${\Op}_1$ is also an optimal clustering of the instance $\I'=(P,d',t)$. Thus ${\Op}_1$ is $\epsilon$-close to $\Op$. 
Next, we prove the following claim. 

\begin{claim}\label{cl:opt3}
 The optimal dilation of $\I_1$ is 3.
\end{claim}

\begin{proof}
  As for any $u,v \in V$, $d_1(u,v)\le 3\cdot d(u,v)$, the optimal dilation of $\I_1$ is at most 3. We prove that this dilation is at least 3. Suppose the dilation is less than 3. 
 Let $\Pi'$ be any placement with dilation less than 3 that induces the optimal clustering ${\Op}_1$ of $\I_1$. Then, we show that ${\Op}_1$ is also a feasible clustering of $\I$ with dilation less than 1. But, this is a contradiction, and hence the claim follows. Next, given $\Pi'$, we show the existence of a placement for $\I$ with dilation less than 1 that induces ${\Op}_1$. 
 
 Consider any cluster $C'\in {\Op}_1$, and suppose it gets covered by an $r_l$-ball $B=B(w,r)$ in $\Pi'$. Let $x$ be any point in $C'$. Now, consider the distance $d_1$. Let $\pi$ be any shortest path between $w$ and $x$. We claim that $\pi$ cannot contain the edge $(p,v)$ for any $v\in (C_1\cup C_2)\setminus \{c_1\}$ with $d(p,v)\ge r_i$. For the sake of contradiction, say $\pi$ contains $(p,v)$. 
 Note that $d_1(p,v)=3r_i$. As $\pi$ contains $(p,v)$, $d_1(w,p) \le r-3r_i$. Now, consider any point $u\in (C_1\cup C_2)\setminus \{c_1\}$. If $d(p,u) \ge r_i$, $d_1(p,u)=3r_i$. Otherwise, $d(p,u) < r_i$, and thus $d_1(p,u)=3\cdot d(p,u) < 3r_i$. Thus, $d_1(w,u)\le d_1(w,p)+d_1(p,u)\le r$. Hence, all the points of $(C_1\cup C_2)\setminus \{c_1\}$ are in $B$. But, as $C_1,C_2$ contain more than $\epsilon n +1$ points, it follows that there is an optimal clustering of $\I_1$ that is not $\epsilon$-close to $\Op$. Thus, we get a contradiction. Hence, $\pi$ does not contain $(p,v)$, and thus from the definition of the metric $d_1$, it follows that $d_1(w,x)= 3\cdot d(w,x)$. Thus, a ball centered at $w$ and having radius $r/3$ can cover the points of $C'$ in $\I$. Now, note that $r < 3r_l$, and thus $r/3 < r_l$. Hence, it is sufficient to use an $r_l$-ball with less than $ 1$ factor expansion to cover the points of $C'$ in $\I$. In our new placement for $\I$, we use the $r_l$-ball $B(w,r/3)$ corresponding to each such cluster $C'$.  
Clearly, the dilation of the new placement is less than 1. 
\end{proof}

Now, we show a clustering ${\Op}_2$ of $\I_1$ that contains exactly $k$ clusters, has dilation 3 and is not $\epsilon$-close to $\Op$. ${\Op}_2$ contains all the clusters in $\Op$ except $C_1$ and $C_2$, and the clusters $(C_1\cup C_2)\setminus \{c_1\}, \{c_1\}$. Note that for any $s\in (C_1\cup C_2)\setminus \{c_1\}$, $d(p,s) \le 3r_i$. Thus, $(C_1\cup C_2)\setminus \{c_1\}$ can be covered by a ball of radius $3r_i$. It follows that the dilation of ${\Op}_2$ is at most 3 and hence it is an optimal clustering. Clearly, the two clusterings $\Op$ and ${\Op}_2$ differ in $> \epsilon n$ points, as $|C_1|> \epsilon n+1$ and $|C_2|> \epsilon n+1$. Now, for the same reason mentioned before, ${\Op}_2$ is also an optimal clustering of the instance $\I'=(P,d',t)$. Hence, $d'$ is the desired $3$-perturbation. This completes the proof of the lemma. 
\end{proof}

In the proof of the above lemma, one could have defined $d'$ directly without going via $d_1$. However, for simplicity of exposition, we have followed this approach. Indeed, this approach shows that if one defines $\psi$-perturbation by increasing the (instead of decreasing) distances, the lemma still holds. A proof can directly use the 3-perturbation $d_1$ in that case.

Note that, as a 3-PR instance is also a $(3,0)$-PR instance, the above lemma trivially follows for 3-PR instances. In the following, we will show that the above mentioned property of the optimal clustering follows even for any 2-PR instance.

\begin{lemma}
[$*$]
\label{lem:well-separated}
 Consider any optimal placement $\Pi$ for a $2$-PR NUkC instance $\mathcal{I}=(P,d,t)$ with optimal dilation 1. Let $C_1$ and $C_2$ be two clusters induced by two balls of $\Pi$ with radius $r_i$ and $r_j$, respectively, where $r_i \ge r_j$. Then, for any $p\in C_1$ and $q\in C_2$, $d(p,q) > r_i$. 
\end{lemma}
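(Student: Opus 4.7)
The proof will follow the template of Lemma~\ref{lem:well-separatedrobust}, adapting it from the $(3,\epsilon)$-PR to the $2$-PR regime. Suppose for contradiction that $d(p,q)\le r_i$ for some $p\in C_1$ and $q\in C_2$. The plan is to construct a metric $2$-perturbation $d'$ of $d$ and exhibit a feasible clustering $\Op_2\ne\Op$ of $\I'=(P,d',t)$ that is also optimal there, violating the uniqueness of the optimal clustering forced by $2$-PR.

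To build $d'$, I define a complete graph $G$ on $P$ whose edge weights are $l(u,v)=2r_i$ whenever $u=p$, $v\in(C_1\cup\{q\})\setminus\{p\}$ and $d(p,v)\ge r_i$, and $l(u,v)=2d(u,v)$ otherwise. The cluster-diameter bound $d(p,w)\le 2r_i$ for $w\in C_1$ and the hypothesis $d(p,q)\le r_i$ ensure $d(u,v)\le l(u,v)\le 2d(u,v)$ on every edge; letting $d_1$ be the shortest-path metric on $G$, one obtains $d\le d_1\le 2d$, so $d':=d_1/2$ is a valid metric $2$-perturbation of $d$, and moreover $d_1(p,u)\le 2r_i$ for every $u\in C_1\cup\{q\}$. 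The alternative clustering is the single-point swap $\Op_2:=(\Op\setminus\{C_1,C_2\})\cup\{C_1\cup\{q\},\ C_2\setminus\{q\}\}$. Replacing the $r_i$-ball at $c_1$ in $\Pi$ by $B(p,2r_i)$ (which covers $C_1\cup\{q\}$ in $d_1$), retaining the $r_j$-ball at $c_2$ (whose $d_1$-radius is at most $2r_j$ since $d_1\le 2d$), and keeping the other balls of $\Pi$ (each of dilation at most $2$ in $d_1$ for the same reason) yields a placement witnessing that $\Op_2$ has dilation at most $2$ in $\I_1:=(P,d_1,t)$.

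The crux, analogous to Claim~\ref{cl:opt3}, is to show that the optimal dilation of $\I_1$ equals exactly $2$; combined with the above this makes $\Op_2$ an optimal clustering of $\I_1$ (equivalently, of $\I'$) different from $\Op$, contradicting the uniqueness forced by $2$-PR. Suppose instead that the optimal clustering $\Op_1$ of $\I_1$ is realised by a placement $\Pi^*$ of dilation $\beta<2$; by $2$-PR applied to $\I'$, $\Op_1=\Op$. For each ball $B=B(w,r)$ of $\Pi^*$ of type $r_l$ (so $r\le\beta r_l<2r_l$), I split into cases. \emph{Case (a)}: some shortest path in $G$ from $w$ to a point $x$ of $B$'s cluster traverses a special edge $(p,v)$. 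The decomposition $d_1(w,p)+2r_i+d_1(v,x)\le r$, combined with $d_1(p,u)\le 2r_i$ for every $u\in C_1\cup\{q\}$, yields $d_1(w,u)\le r$ and hence $C_1\cup\{q\}\subseteq B$; forming a feasible clustering $\Op_1^*$ witnessed by $\Pi^*$ in which the cluster of $B$ contains both $C_1$ and $q$ (possible by the above containment) then produces an $\Op_1^*\ne\Op$ (no cluster of $\Op$ can contain both $C_1$ and $q$) of dilation at most $\beta$, hence also optimal---contradicting $2$-PR's uniqueness. \emph{Case (b)}: no shortest path used inside any ball of $\Pi^*$ traverses a special edge. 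Then each such shortest path evaluates to $2d(w,x)$ by the triangle inequality on standard edges, so $d_1(w,x)<2r_l$ becomes $d(w,x)<r_l$ for every ball, yielding a placement of $\I$ that induces $\Op_1$ with dilation strictly less than $1$---contradicting that the optimal dilation of $\I$ is $1$. The main obstacle compared with Lemma~\ref{lem:well-separatedrobust} is the absence of $\epsilon n$-slack, which forces the contradicting clustering to differ from $\Op$ by merely a single reassignment; I expect the exact (rather than approximate) uniqueness of the optimal clustering under $2$-PR to carry this through, as exploited in case (a).
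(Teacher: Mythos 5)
Your overall plan is correct and matches the paper's high-level strategy: build a metric $2$-perturbation that shrinks the relevant distances, show the perturbed instance still has the same optimal dilation, exhibit a second optimal clustering obtained by reassigning $q$ to $C_1$, and contradict the uniqueness forced by $2$-PR. However, the paper's actual proof of this lemma does \emph{not} follow the template of Lemma~\ref{lem:well-separatedrobust}; it uses a much lighter perturbation, and the heavier perturbation you import from the $(3,\epsilon)$-PR proof creates genuine problems in the $2$-PR regime.

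The first gap is in your Case~(a). You declare special edges $(p,v)$ for every $v\in(C_1\cup\{q\})\setminus\{p\}$ with $d(p,v)\ge r_i$, and then write the decomposition $d_1(w,p)+2r_i+d_1(v,x)\le r$. This tacitly assumes the shortest path meets $p$ \emph{before} $v$. If instead the path traverses the edge in the direction $v\to p$, the decomposition becomes $d_1(w,v)+2r_i+d_1(p,x)\le r$, which gives $d_1(w,v)\le r-2r_i$ but only $d_1(w,p)\le r$; the chain $d_1(w,u)\le d_1(w,p)+d_1(p,u)\le r$ then fails, because $d_1(p,u)$ can be as large as $2r_i$, and the detour through $v$ does not help since $d_1(v,u)$ may be up to $4r_i$. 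So you do not get $C_1\cup\{q\}\subseteq B$, and if the particular $v$ happens to lie in $C_1$ (rather than being $q$), you also do not get the simpler two-points-from-different-clusters contradiction via Observation~\ref{obs:uniquecluster}. The paper avoids this entirely by introducing a \emph{single} special edge $(c_1,q)$ with length $\min\{d(c_1,q),r_i\}$ and leaving every other distance untouched: since $c_1\in C_1$ (first bullet of Observation~\ref{obs:uniquecluster}) and $q\in C_2$, a shortest path from $w$ that uses $(c_1,q)$ puts both $c_1$ and $q$ in $B$ regardless of direction, and the second bullet of Observation~\ref{obs:uniquecluster} ends the argument immediately. Using the ball center $c_1$ rather than an arbitrary $p\in C_1$ is essential here: $d(c_1,u)\le r_i$ for all $u\in C_1$ already, so only $q$ needs a shortened edge, and the new clustering is covered by $B(c_1,r_i)$ at dilation exactly $1$.

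The second gap is the singleton case $C_2=\{q\}$: your $\Op_2$ then has an empty cluster and is not a valid $k$-clustering. The paper handles this explicitly by moving one point out of a non-singleton cluster to re-form a $k$-clustering before invoking uniqueness. If you switch to the single-edge, $c_1$-centered construction and add the singleton fix, your argument goes through and is essentially the paper's proof.
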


\section{NUkC with a Constant Number of Radii Classes}\label{sec:constantclasses}
In this section, we show a polynomial time reduction from NUkC to the Constrained Resource Minimization for Fire Containment on Trees problem.

\begin{definition}
 (Constrained Resource Minimization for Fire Containment on Trees (CRMFC-T)). Given a rooted tree $T=(V,E)$ with height $t+1$, a set of forbidden nodes $F\subseteq V$, and integers $k_1,\ldots,k_t$, the goal is to decide if there is a collection of non-root nodes $U\subseteq (V\setminus F)$ such that (a) for every leaf-root path $\pi$, $U$ contains at least one node from $\pi$, and (b) $|U\cap L_i|\le k_i$ for $1\le i\le t$, where $L_i$ is the layer $i$ nodes of $T$, i.e., the nodes at distance exactly $i$ from the root. 
\end{definition}

Given any instance $\I=(P,d,t)$ of NUkC under 2-PR or $(3,\epsilon)$-PR (the size of each optimal cluster is more than $\epsilon n+1$), we will show how to construct an instance $\I'$ of CRMFC-T such that $\I$ has a feasible placement with dilation 1 iff $\I'$ has a feasible solution. Also, from a feasible solution for $\I'$, a feasible solution for $\I$ can be computed in polynomial time. In the constructed instance $\I'$, the height of the tree is one more than the number of radii classes in NUkC. We show that CRMFC-T can be solved in polynomial time if the height of the input tree is a constant (Appendix \ref{ap:treealgo}). From Lemma \ref{lem:dila1}, it follows that the perturbation resilient version of NUkC can be solved in polynomial time if the number of classes is a constant. Thus, we obtain the following theorem.

\begin{theorem}
NUkC under 2-PR (or $(3,\epsilon)$-PR, where the size of each optimal cluster is more than $\epsilon n+1$) can be solved in polynomial time if the number of radii classes is a constant.  
\end{theorem}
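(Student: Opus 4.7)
\medskip

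\noindent\textbf{Proof plan.} By Lemma~\ref{lem:dila1} it suffices to give a polynomial time algorithm that, given a PR instance $\I=(P,d,t)$ which admits a feasible placement of dilation $1$, returns one such placement (and answers \no{} otherwise). So fix $\I$ and assume (for the ``yes'' case) an optimal placement $\Pi$ of dilation $1$ inducing the unique optimal clustering $\Op=\{C_1,\dots,C_k\}$. The plan is to build, in polynomial time from $\I$, an instance $\I'$ of CRMFC-T whose tree $T$ has height $t+1$, and to argue that feasible placements of dilation $1$ in $\I$ are in bijective (and efficiently computable) correspondence with feasible solutions of $\I'$. Since the number of radii classes $t$ is a constant, we will then invoke the polynomial time algorithm for CRMFC-T on constant-height trees (Appendix~\ref{ap:treealgo}).

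\medskip

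\noindent\textbf{Construction of the tree.} The nodes of $T$ other than the root correspond to candidate balls $B(p,r_i)$, one per pair $(p,i)$ with $p\in P$ and $1\le i\le t$; layer $i$ contains the $r_i$-balls (so layer $1$ has the largest balls), and the $n$ leaves at layer $t+1$ are the individual points of $P$. A node $B(p,r_i)$ at layer $i<t$ is the parent of a node $B(q,r_{i+1})$ at layer $i+1$ whenever $B(q,r_{i+1})\subseteq B(p,r_i)$ in the metric $d$; a leaf point $q$ at layer $t+1$ is a child of $B(p,r_t)$ whenever $q\in B(p,r_t)$. To obtain an actual tree rather than a DAG, we arbitrarily fix one parent per node; the well-separated structure of Lemmas~\ref{lem:well-separatedrobust} and \ref{lem:well-separated} will guarantee that this ambiguity does not hurt. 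The set $F$ of forbidden nodes marks those balls that, under PR, provably cannot appear in an optimal placement — for example, any $r_i$-ball that strictly contains another candidate $r_j$-ball with $r_j\ge r_i$, or any $r_i$-ball whose center is too close to a larger candidate center, violating the well-separation inequality $d(p,q)>\max(r_i,r_j)$.

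\medskip

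\noindent\textbf{Equivalence via well-separation.} The forward direction is immediate: any feasible placement of $\I$ with dilation $1$ picks $k_i$ $r_i$-balls per layer whose union covers $P$, which selects a set $U$ of non-root nodes that hits every root-to-leaf path of $T$ and respects the cardinality bound $|U\cap L_i|\le k_i$. For the reverse direction we must show that any solution $U$ of CRMFC-T yields a feasible dilation-$1$ placement, and here is where the well-separated property is essential. Given $U$, we open the balls corresponding to its nodes; the hitting-path condition means every point of $P$ lies in some opened ball. The delicate step is to verify that this placement is genuinely feasible — that is, that no point is ``double counted'' in a way inconsistent with some optimum — and that the tree's arbitrary parent choices never exclude the true optimum. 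Here Lemmas~\ref{lem:well-separatedrobust} and \ref{lem:well-separated} imply that any two optimal clusters, covered by balls of radii $r_i\ge r_j$, are at distance $>r_i$; hence the containment relation among optimal balls is strictly nested by radius, so the correct parent of each optimal ball is forced. This lets us argue that the family of feasible $U$'s covers all optimum placements and only corresponds to genuine feasible placements of $\I$.

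\medskip

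\noindent\textbf{Solving CRMFC-T and wrapping up.} A standard bottom-up dynamic program on $T$ computes, at every node $v$, the set of achievable profiles $(a_1,\dots,a_t)\in\{0,\dots,k_1\}\times\cdots\times\{0,\dots,k_t\}$ such that the leaves in the subtree rooted at $v$ can be hit using exactly $a_i$ nodes from layer $i\cap(V\setminus F)$; the merge at an internal node is the convolution of children's profiles, truncated at $(k_1,\dots,k_t)$. Since the tree has height $t+1$ with $t$ a constant, the number of profiles is $O(n^t)$ and the whole DP runs in polynomial time. Combining with Lemma~\ref{lem:dila1} yields the theorem. The main obstacle I anticipate is the second bullet above: crafting the tree $T$ and the forbidden set $F$ so that the correspondence between solutions of CRMFC-T and feasible dilation-$1$ placements of $\I$ is faithful in both directions; this is where the non-trivial work lies, and where the ``well-separated'' structure guaranteed by $2$-PR (respectively $(3,\epsilon)$-PR with large clusters) is crucially used to rule out spurious nestings that would otherwise break the reduction.
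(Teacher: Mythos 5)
Your high-level plan (reduce to CRMFC-T on a constant-height tree, use Lemma~\ref{lem:dila1} to dispense with the guessing of $\alpha$, solve CRMFC-T by a profile-based dynamic program) matches the paper's. The trouble is the tree construction, which you recognize yourself as the ``main obstacle,'' and the gap there is real and not just a matter of elbow grease.

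In your construction the non-root nodes are candidate balls $B(p,r_i)$, and the parent relation is set containment. As you note, containment gives a DAG, and you propose to ``arbitrarily fix one parent per node'' and appeal to well-separation. This breaks both directions of the reduction. For the forward direction: if a feasible dilation-$1$ placement covers point $q$ by ball $B(p_2,r_t)$ but your arbitrary parent choice made $q$ a child of $B(p_1,r_t)$, the root-to-$q$ path in $T$ does not pass through $B(p_2,r_t)$, so the placement does not hit that path even though it covers $q$. For the reverse direction, a CRMFC-T solution might select a ball node that you happened to wire to the wrong subtree, so the ``hitting every path'' condition could be satisfied by balls that fail to cover all points, or could force extra selections per layer. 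Crucially, Lemmas~\ref{lem:well-separatedrobust} and~\ref{lem:well-separated} only constrain the clusters of the \emph{unique optimal} placement; they say nothing about the containment structure among the $O(nt)$ candidate balls at large, almost all of which are irrelevant to the optimum, and so they cannot disambiguate your arbitrary parent choices.

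The paper avoids this entirely by building the tree from \emph{connected components of the distance graph after deleting long edges}, not from candidate balls. At round $i$, each level-$(i-1)$ node $v$ carries a vertex set $G_v$; deleting edges of weight $>r_i$ from $G_v$ partitions it into components, which become $v$'s children, and a child is marked ``yes'' only if some vertex in it is within $r_i$ of all its vertices. This makes the tree canonical (each point belongs to exactly one node per level), so no parent ambiguity arises. Well-separation is then used precisely where it has bite: for an optimal cluster $C$ covered by an $r_j$-ball, separation $d(p,q)>r_j$ for $p\in C$, $q\notin C$ forces $C$ to be exactly one component at level $j$, which is marked ``yes'' by the first bullet of Observation~\ref{obs:uniquecluster}. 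That gives a clean bijection between the optimal placement's balls and selected tree nodes. Your ball-containment tree does not have this canonicity, and without it the correspondence does not go through.

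One further, smaller remark: your forbidden set $F$ is defined by ad hoc separation tests on candidate balls. In the paper, $F$ is simply the set of components that cannot be covered by a single ball of the layer's radius, which is exactly the semantic condition one needs. It is not clear your criteria would compute the right $F$, since again they reason about generic candidate balls rather than the components that actually encode the coverage constraint.
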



\subsection{Tree Construction}
Let $G$ be the complete graph that defines the distances between the input points. Note that we are also given the input radii $r_1 > r_2 > \ldots >r_t$. We construct the tree $T$ in $t$ rounds that contains $t$ levels other than the root level. We denote the nodes at level $i$ by $L_i$ for $i\in \{0,\ldots,t\}$. $L_0$ contains a singleton node -- the root of the tree. For $i\ge 1$, in $i^{th}$ round, we construct the nodes $L_i$ and connect them with the nodes in $L_{i-1}$. Each node $v$ in $T$ corresponds to a connected subgraph $G_v$ of $G$. The root corresponds to $G$ itself. Also, each node is marked with either yes or no denoting if the node can be selected or it is in the forbidden set.

For each index $i \in \{1,\ldots,t\}$, in $i^{th}$ round, we consider all the nodes $v\in L_{i-1}$ and the subgraph $G_v$ corresponding to $v$. We remove all the edges with weight more than $r_i$ from $G_v$. Let $G_v^1,\ldots,G_v^l$ be the connected components formed from $G_v$ due to the removal of these edges. We add $l$ children of $v$ to $L_i$ corresponding to these connected $l$ subgraphs. For each such child $u$, if there is a node $w$ in $G_u$, such that for all node $x$ in $G_u$, $d(w,x)\le r_i$, we label $u$ with yes. Otherwise, we label $u$ with no (forbidden). Lastly, for each level $i\ge 1$, the number of nodes that can be chosen from $L_i$ in CRMFC-T is set to $k_i$. The following lemma establishes the connection between the two instances $\I$ and $\I'$. 

\begin{lemma}
$\I$ has a feasible placement with dilation 1 iff $\I'$ has a feasible solution to CRMFC-T.  
\end{lemma}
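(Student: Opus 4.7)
The plan is to establish the biconditional by showing, in the forward direction, how each cluster of an optimal NUkC placement maps to a ``yes''-labeled node of $T$, and in the backward direction, how each selected node of a CRMFC-T solution yields a ball in an NUkC placement with dilation $1$. The workhorse throughout will be the well-separation property guaranteed by perturbation resilience (Lemmas~\ref{lem:well-separatedrobust} and~\ref{lem:well-separated}): for any two clusters $C, C'$ of radii $r_i, r_{i'}$ with $r_i \ge r_{i'}$, every $p \in C$ and $q \in C'$ satisfy $d(p,q) > r_i$.

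For the forward direction, I would fix any optimal clustering $\Op$ of $\I$ with dilation $1$ together with an inducing placement $\Pi$. I would first record the easy structural fact that the node set $L_i$ of $T$ is in bijection with the connected components of the spanning subgraph $G_i$ of $G$ that retains only edges of weight at most $r_i$: cumulatively stripping edges of weight greater than $r_1, \ldots, r_i$ leaves exactly the edges of weight $\le r_i$, so the refinement produced by the construction coincides with the component partition of $G_i$. Next, for each cluster $C \in \Op$ of radius $r_i$, I would show that $C$ is \emph{exactly} one such component. Internal connectivity follows once I argue that the center $c$ of $C$'s covering ball lies in $C$ itself: if $c$ belonged to some other cluster $C'$ of radius $r_{i'}$, then $d(c,x) \le r_i$ for $x \in C$ would contradict the well-separation bound $d(c,x) > \max(r_i, r_{i'}) \ge r_i$. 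Given $c \in C$, every $x \in C$ is joined to $c$ by an edge of weight $\le r_i$, so $C$ is internally connected in $G_i$. External disconnectedness in $G_i$ is immediate from well-separation, since every edge leaving $C$ has weight $> r_i$. Hence there is a unique node $v^*_C \in L_i$ whose subgraph $G_{v^*_C}$ has vertex set exactly $C$, and the witness $c \in C$ certifies that $v^*_C$ is labeled ``yes''. Taking $U := \{v^*_C : C \in \Op\}$, the multiplicity bound $|U \cap L_i| \le k_i$ holds since $\Op$ contains at most $k_i$ clusters of radius $r_i$; for path-coverage, any leaf $\ell$ contains some $p$ lying in a cluster $C$ of radius $r_{i_C}$, and the ancestor of $\ell$ at level $i_C$ in $T$ must be $v^*_C$ because both correspond to the component of $p$ in $G_{i_C}$.

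For the backward direction, given a feasible $U$, I would open the ball $B(w_u, r_i)$ for each $u \in U \cap L_i$, where $w_u$ is the ``yes''-witness for $u$. The condition $|U \cap L_i| \le k_i$ ensures at most $k_i$ balls of radius $r_i$ are used, and any remaining balls can be opened trivially to meet the exact multiplicity. Every $p \in P$ traces a unique root-to-leaf path in $T$ by following, at each level, the component containing $p$; by feasibility this path meets some $u \in U$, after which $p \in V(G_u)$ and hence $p \in B(w_u, r_i)$, giving a valid placement with dilation $1$.

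The main obstacle will be the exact matching, in the forward direction, of each optimal cluster $C$ with a node of $T$ at the correct level. This is the one place where perturbation resilience is essential: it supplies both the fact that the center of $C$'s covering ball lies inside $C$ and the fact that no edge of weight $\le r_i$ leaves $C$. Once this bijection is in hand, the verifications of multiplicity and path-coverage, as well as the backward construction, reduce to bookkeeping.
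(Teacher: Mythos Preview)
Your proposal is correct and follows essentially the same approach as the paper's proof: in the forward direction, you use well-separation to show each optimal cluster $C$ of radius $r_i$ is exactly a connected component of $G_i$ and hence corresponds to a single ``yes'' node at level $i$, then verify the multiplicity and path-coverage conditions; in the backward direction, you open the witness ball for each selected node and cover every point via the root-to-leaf path argument. The only (minor) deviation is that you derive ``the center $c$ lies in $C$'' directly from the well-separation bound, whereas the paper invokes Observation~\ref{obs:uniquecluster} for this step---your route is equally valid and arguably more self-contained, but the overall structure and key ideas are identical.
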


\begin{proof}
 First, suppose there is a feasible solution to $\I'$. For each chosen node $v$, $v$ must be a yes node. Let $i$ be the integer such that $v\in L_i$. Then, the points in $G_v$ can be covered by an $r_i$ ball centered at some point in $G_v$. We choose this ball in our placement. Note that we select at most $k_i$ balls of radius $r_i$ for all $i$. We prove that each point is covered in the constructed placement. Consider any point $p$. The way we construct the tree, each point can lie in the connected subgraph $G_v$ of exactly one node $v$ of $L_j$ for all $j$. Let $\pi$ be the root-leaf path in $T$, such that for any $v\in \pi$, $p$ is in $G_v$. Now, there must be a node along $\pi$ that is chosen in the solution of CRMFC-T. Let $u$ be such a node. As we place a ball of radius $r_i$ that covers all the points of $G_u$, $p$ gets covered. Thus, $\I$ has a feasible placement with dilation 1. 
 
 Now, suppose $\I$ has a feasible placement with dilation 1. Let $\Op$ be the clustering induced by the placement. Now, consider any cluster $C\in \Op$, which is covered by a ball of radius $r_j$. Thus, c-radius$(C,d)\le r_j$. The way the tree $T$ is constructed it follows that all the points in $C$ remain in the same connected subgraph $G_v$ corresponding to a unique vertex $v\in L_i$ for each $i \le j$. Let $G_u$ be the subgraph corresponding to level $j-1$. As $\I$ is a 2-PR (resp. $(3,\epsilon)$-PR) instance, from Lemma \ref{lem:well-separated} (resp. Lemma \ref{lem:well-separatedrobust}), we know that, for any $p\in C$ and $q\in P\setminus C$, $d(p,q)>r_j$. Thus, when the edges with weight more than $r_j$ are removed from $G_u$, $p$ and $q$ cannot remain in the same component. But, as c-radius$(C,d)\le r_j$ all the points of $C$ remain in the same component. Also, by the first property of Observation \ref{obs:uniquecluster}, the center of the $r_j$-ball that covers $C$ must lie in $C$.   It follows that there is a yes node $C(v) \in L_j$ such that $G_{C(v)}$ contains only the points of $C$ as vertices. For each cluster $C\in \Op$, we select the yes node $C(v)$ in the solution to CRMFC-T. It is not hard to see that we choose at most $k_j$ nodes from $L_j$. Now, consider any root-leaf path $\pi$ in $T$ corresponding to a leaf $l$. Let $p$ be a point in $G_l$. Also, let $p$ be a point in the cluster $C\in \Op$. Then, there must be a yes node $C(v)$ in $\pi$ such that $G_{C(v)}$ contains only the points of $C$. As we choose $v$ in our solution, we have at least one node from the path $\pi$. Hence, the constructed solution is feasible. 
\end{proof}

\section{Hardness of Approximation}\label{sec:hardness}
In this section, we will prove the following theorem. 

\begin{theorem}\label{th:nukchardpr}
 For any constant $c$ and any $\gamma\le c^{n^c}$, NUkC under ${\gamma}$-PR is hard to approximate in polynomial time within a factor of ${\gamma}$, unless $\np=\rp$.
\end{theorem}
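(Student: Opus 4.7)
The plan is to prove hardness via a chain of reductions that terminates with tree-metric NUkC instances provably $\gamma$-perturbation-resilient. Starting from satisfiability, I would first invoke a randomized Valiant–Vazirani-style isolation to reduce \probSAT to a unique-witness version of 3SAT, paying the $\np = \rp$ assumption. I would then reduce unique 3SAT to a ``unique'' version of the Firefighters problem on trees, where yes-instances are required to have a unique optimal firefighter placement, and where the known $\gamma$-inapproximability for Firefighters is preserved (this is the chain referenced after Theorem~\ref{th:nukchardpr}). This inapproximability for Firefighters on trees, scaled appropriately, gives the $\gamma \le c^{n^c}$ range.

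Next, I would reduce the unique Firefighters instance to a tree-metric NUkC instance in the style of Chakrabarty~\emph{et al.}~\cite{ChakrabartyGK16}, with one radius class per level of the firefighter tree. The target dilation is~$1$, and the radii $r_1 > r_2 > \cdots > r_t$ together with multiplicities $k_1, \ldots, k_t$ are chosen to mirror the per-level budget for firefighters. The leaves of the firefighter tree correspond to the points that must be covered, and the yes/no labelling of internal nodes is enforced by edge weights: a node can be ``saved'' by an $r_i$-ball iff its subtree's induced NUkC subproblem admits an $r_i$-ball covering it, and otherwise the triangle inequality forces the NUkC algorithm to cover each leaf separately (which blows up by a factor $\gamma$). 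Because tree-metric NUkC is already $\gamma$-inapproximable by this sort of reduction, the only new content is controlling the edge weights so that the perturbation-resilience hypothesis also holds.

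The main obstacle, and the focus of the technical work, is to equip the reduction with a distance function that makes the constructed NUkC instance $\gamma$-perturbation-resilient. For this I need two things: (i) the optimal NUkC clustering is unique, and (ii) it is preserved by every metric $\gamma$-perturbation $d_1$ with $d/\gamma \le d_1 \le d$. For (i) I would pipe the uniqueness from the ``unique Firefighters'' instance through the reduction: an alternative optimal clustering of the NUkC instance would, via the subtree–ball correspondence used in the forward direction of the reduction, produce an alternative optimal firefighter placement. For (ii) the naive weights from~\cite{ChakrabartyGK16} are too tightly packed, so I would insert multiplicative gaps of at least $\gamma$ between consecutive radii classes $r_i/r_{i+1}$ and between ``far'' pairs on different subtrees versus ``close'' pairs within the same subtree, so that even after shrinking distances by $\gamma$ no new ball containments are created and the cluster-to-ball assignment remains forced. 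This is exactly the ``modification'' the authors allude to, and verifying that the gap budget still fits within $\gamma \le c^{n^c}$ is the delicate bookkeeping step.

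Finally, to lift the hardness from tree metrics to $d$-dimensional Euclidean space for any $d \ge 1$, I would apply Gupta's embedding~\cite{Gupta00} of tree metrics into Euclidean space with constant distortion, as the authors indicate. Provided the distortion is strictly less than the slack built into the gaps above, the Euclidean image of the NUkC instance inherits both the $\gamma$-inapproximability and the $\gamma$-perturbation resilience, yielding the stated theorem.
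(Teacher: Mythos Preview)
Your plan is correct and matches the paper's approach: Valiant--Vazirani isolation, then a parsimonious chain through a unique-solution SAT variant (the paper uses 1-in-3SAT) to an ``equivalent-solutions'' version of RMFC-T, then the Chakrabarty-et-al.\ reduction to tree-metric NUkC with edge weights scaled by powers of $(\gamma+1)$ so that no $\gamma$-perturbation can collapse adjacent distance levels, which is exactly the gap mechanism you describe.

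One correction on your final paragraph: the Euclidean extension is a separate theorem (Theorem~\ref{th:nukchardeuclid}), not part of Theorem~\ref{th:nukchardpr}, and Gupta's embedding does \emph{not} have constant distortion---it is $O(dL^{1/(d-1)}\sqrt{\min\{\log L,d\}})$. Consequently the paper does not argue that the embedded Euclidean instance is itself $\gamma$-PR; instead it runs the contrapositive, showing that a hypothetical $\beta$-approximation for $\beta$-PR Euclidean NUkC would yield a $\gamma$-approximation for $\gamma$-PR tree NUkC after absorbing the distortion into the gap between $\beta$ and $\gamma$.
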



To prove this theorem, we use a chain of reductions that involves the following problems. \\\\ 
%
%
%
%
%
%
1-in-3SAT \cite{Schaefer:1978}

INSTANCE: An ordered pair $(B,C)$ consisting of a set $B$ of Boolean variables and a set $C$ of clauses over $B$ having three literals each in conjunctive normal form.

QUESTION: Is there a truth assignment for $B$ such that every clause in $C$ contains exactly one true literal?\\\\
\uppercase{Resource Minimization for Fire Containment on Trees (RMFC-T)} \cite{FinbowKMR07,KingM10}

INSTANCE: A rooted tree $T$ and an integer $m$.

QUESTION: Is there a set $N$ of non-root nodes such that every root-leaf path contains a node from $N$ and for any integer $j\ge 1$, $|N\cap L_j|\le m$, where $L_j$ is the set of nodes at distance exactly $j$ from the root?

The chain of reductions that we use consists of the following reductions: (1) 3SAT to 1-in-3SAT, (2) 1-in-3SAT to RMFC-T, and (3) RMFC-T to NUkC. Note that NUkC under PR has a unique optimal solution. As we would like to show hardness for the PR version of NUkC, we will consider ``Unambiguous'' version of all these problems. For ``Unambiguous'' version of 3SAT and 1-in-3SAT, if an instance has a feasible solution, the solution is unique. For ``Unambiguous'' version of RMFC-T, if an instance has a feasible solution, the solution has a specific structure that we will define shortly. For the reduction from 3SAT to 1-in-3SAT, we ensure that the reduction preserves the number of solutions. Such a reduction is called a parsimonious reduction. To refer to the Unambiguous version of a problem we add a prefix `U-' to the problem name. Next, we discuss the details of the reductions. 

In a celebrated work, Valiant and Vazirani \cite{valiant1985np} showed that U-3SAT is hard, unless $\np=\rp$. Schaefer \cite{Schaefer:1978} showed a reduction from 3SAT to 1-in-3SAT to prove the $\np$-hardness of the latter problem. As noted in \cite{abs-1808-02821} the reduction is parsimonious. We use the same reduction (now from U-3SAT to U-1-in-3SAT) to prove the hardness of U-1-in-3SAT, unless $\np=\rp$.

Next, we discuss the reduction from 1-in-3SAT to RMFC-T. First, we define the Unambiguous version of RMFC-T. For a vertex $v$ of a rooted tree $T$, let leaves$(T_v)$ be the set of leaves at the subtree rooted at $v$. For any two feasible solutions $S_1$ and $S_2$ of RMFC-T, $S_1$ and $S_2$ are called equivalent, if the two sets $\cup_{v\in S_1}$ $\{$ leaves$(T_v)\}$ and $\cup_{v\in S_2}$ $\{$ leaves$(T_v)\}$ are identical. U-RMFC-T is same as RMFC-T except if the input instance has more than one feasible solutions, then all the feasible solutions are pairwise equivalent. The reduction from U-1-in-3SAT to U-RMFC-T appears in the appendix. The reduction is a non-trivial adaptation of the reduction due to Finbow {\em et al.}~\cite{FinbowKMR07} from a version of 3SAT (RESTRICTED NAE 3-SAT) to the RMFC-T problem. We summarize our finding in the following lemma. 

\begin{lemma}\label{th:fire}
Given a tree $T$,
it is not possible to distinguish between the following two cases in polynomial time, unless $\np=\rp$.
 \begin{itemize}
  \item \textbf{YES}: There is a solution to the \emph{U-RMFC-T} instance with $m=1$.
  \item \textbf{NO}: There is no solution to the \emph{U-RMFC-T} instance with $m=1$.
 \end{itemize} 
\end{lemma}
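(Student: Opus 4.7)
The plan is to chain three hardness transfers. The starting point is Valiant--Vazirani, which gives that U-3SAT is hard unless $\np=\rp$. The first step is to invoke Schaefer's reduction from 3SAT to 1-in-3SAT and observe (as in the reference \cite{abs-1808-02821}) that it is parsimonious; hence the number of satisfying assignments is preserved, and a uniquely satisfiable 3SAT formula is mapped to a uniquely satisfiable 1-in-3SAT formula. This yields hardness of U-1-in-3SAT under the same assumption. The last step, which is the substantive one, is a polynomial-time reduction from U-1-in-3SAT to U-RMFC-T with budget $m=1$ that maps feasibility to feasibility and maps a unique satisfying assignment to an equivalence class of feasible fire-containment solutions.

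For the last reduction I would adapt the gadget-based construction of Finbow \emph{et al.}~\cite{FinbowKMR07}, which was originally tailored to a restricted version of NAE 3-SAT. The rooted tree is built in layers: for each variable $x_i$ there is a \emph{variable gadget} — a small subtree in which the firefighter, restricted to saving one node per level, is forced to commit to a ``true'' side or a ``false'' side, with the unchosen side burning. For each clause $(\ell_1 \vee \ell_2 \vee \ell_3)$, interpreted under exactly-one-true semantics, I would attach a \emph{clause gadget} whose root-leaf paths can all be blocked (with the same per-level budget $m=1$) if and only if exactly one of the three literals is set to true. The distinctive feature of 1-in-3SAT relative to NAE 3-SAT is the stricter acceptance condition, so the clause gadget needs additional branches that survive in the ``zero true'' and ``at least two true'' configurations, forcing infeasibility in those cases. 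Variable-gadget leaves are identified with the corresponding literal-entry points of the clause gadgets by standard sharing/duplication tricks, so that the combinatorial truth value chosen in one gadget propagates consistently into every clause in which the variable appears.

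The main obstacle — and the reason a direct copy of \cite{FinbowKMR07} does not suffice — is the unambiguity requirement. I must ensure that whenever the 1-in-3SAT instance admits a unique satisfying assignment, every feasible RMFC-T solution of the output tree covers exactly the same set of leaves (the U-RMFC-T equivalence condition). The way to enforce this is to design the gadgets so that, at each level, the firefighter has no ``useless'' choices: either a node is forced by the unique satisfying assignment, or any alternative choice fails to block a root-leaf path and is therefore infeasible. Concretely, the clause gadget should be built so that the only placement of the lone allowed node at each of its levels corresponds to the (unique) satisfying literal, and the variable gadget should be built so that the Boolean commitment is fully determined by the induced set of saved leaves. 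Once the gadgets are set up in this way, a standard induction on the tree layers shows the bijection between satisfying assignments and equivalence classes of feasible solutions; uniqueness then transfers from U-1-in-3SAT to U-RMFC-T. Combining this with the parsimonious Schaefer reduction and Valiant--Vazirani gives the stated dichotomy, completing the proof of Lemma~\ref{th:fire}.
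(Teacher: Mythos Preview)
Your plan matches the paper's approach exactly: Valiant--Vazirani for U-3SAT, Schaefer's parsimonious reduction to U-1-in-3SAT, and then an adaptation of the Finbow \emph{et al.}\ tree construction to the exactly-one-true semantics with budget $m=1$, with unambiguity enforced at the level of saved-leaf sets (equivalence classes) rather than vertex sets. The one thing missing from your proposal is the concrete gadget engineering that actually makes the last step go through; the paper builds this from explicit ``ladder,'' ``bell,'' and ``snake'' subtrees whose degree-three vertices are level-aligned so that a single defender per level suffices precisely when one literal per clause is true, and the residual freedom (two interchangeable choices at one level per clause) leaves the leaf set invariant---your sketch correctly specifies what such gadgets must accomplish but stops short of constructing them.
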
 

To complete the chain of reductions, now we discuss the last reduction. In particular, we show a reduction from RMFC-T to NUkC that proves the following theorem.

\begin{theorem}\label{th:nukchardtree}
 For any constant $c$ and any $\gamma\le c^{n^c}$, NUkC is $\np$-hard to approximate within a factor of $\gamma$ in tree metrics.
\end{theorem}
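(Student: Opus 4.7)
The plan is to reduce RMFC-T, which is known to be NP-hard even for budget $m=1$, to $\gamma$-gap NUkC in tree metrics. Given an RMFC-T instance $(T,m)$ with $T$ of height $h$ and (WLOG, by padding) all leaves at depth $h$, I would build a weighted tree $T'$ on the vertex set of $T$ whose edge between a level-$(j-1)$ node and its level-$j$ child carries weight $w_j$, chosen to grow fast as $j$ decreases---concretely $w_j=(2\gamma)^{h-j}$. The NUkC point set would be the leaves of $T'$ under the induced tree metric, with $h$ radii classes and multiplicities $k_j=m$; I set $r_j=2\sum_{i=j+1}^{h}w_i$, which is exactly the diameter of any level-$j$ subtree (the maximum pairwise distance between two leaves whose lowest common ancestor is at level $j$).

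For the YES direction, given any RMFC-T solution $N$, I would, for each $v\in N\cap L_j$, pick an arbitrary leaf $\ell_v$ in the subtree of $v$ and open a class-$j$ ball at $\ell_v$. By the choice of $r_j$ this ball covers exactly the leaves in $v$'s subtree, and since $N$ hits every root-leaf path, every leaf of $T'$ is covered, giving an NUkC placement of dilation $1$. For the NO direction, I would take any NUkC placement of dilation $\alpha\le \gamma$ and extract an RMFC-T solution from it. The key separation is that the nearest leaf outside the level-$j$ subtree containing a leaf $\ell$ lies at tree distance at least $2w_j+r_j$ (up to the level-$(j-1)$ ancestor and back down), while the expanded ball has radius at most $\gamma r_j$; the growth condition $w_j>\tfrac{\gamma-1}{2}\,r_j$, which is immediate from $w_j=(2\gamma)^{h-j}$, gives $\gamma r_j<2w_j+r_j$. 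Hence each class-$j$ ball is confined to a single level-$j$ subtree, and taking $N$ to be the level-$j$ ancestors of all the ball centers yields a valid RMFC-T solution---contradicting the NO case.

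The main obstacle is calibrating the edge weights simultaneously for three needs: (i) a class-$j$ ball of dilation $1$ should cover an entire level-$j$ subtree, (ii) even a $\gamma$-expanded class-$j$ ball must not leak into a neighboring subtree, and (iii) the bit-length of the resulting NUkC instance must stay polynomial in $n$. The exponential choice $w_j=(2\gamma)^{h-j}$ settles (i) and (ii) simultaneously because the invariant $w_j/\sum_{i>j}w_i>\gamma-1$ is preserved at every level. For (iii), when $\gamma\le c^{n^c}$ each weight occupies $O(h\log\gamma)=O(n^{c+1})$ bits, still polynomial in $n$, so the reduction runs in polynomial time. Combining these pieces with the NP-hardness of RMFC-T yields the claimed $\gamma$-inapproximability of NUkC in tree metrics for every constant $c$ and every $\gamma\le c^{n^c}$.
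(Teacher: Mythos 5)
Your reduction is essentially the same as the paper's: reduce RMFC-T with budget $m=1$ to NUkC on a tree metric whose edge weights grow geometrically toward the root, take the point set to be the leaves, set the class-$j$ radius to be the level-$j$ subtree diameter, and use the geometric growth to guarantee that even a $\gamma$-dilated class-$j$ ball cannot leak into a neighboring level-$j$ subtree. The paper uses the base $\gamma+1$ and picks telescoping edge weights so that the distance between two leaves with LCA at level $j$ equals $(\gamma+1)^{h-j}=r_j$ exactly; you use base $2\gamma$ and express $r_j$ as twice a geometric sum. These are cosmetic differences; your invariant $w_j>\gamma\sum_{i>j}w_i$ plays the same role as the paper's Observation~\ref{obs:ances}. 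Your bit-length bound $O(h\log\gamma)$ matches the paper's $O(n^c h)$.

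One place where you are too casual: ``(WLOG, by padding) all leaves at depth $h$.'' If ``padding'' means hanging a path of the appropriate length below each shallow leaf, the claim is false: padding can turn a NO instance of RMFC-T into a YES instance. For example, let the root have children $a$ and $b$, with $a$ a leaf at depth $1$ and $b$ having two leaf children $c,d$ at depth $2$; with $m=1$ this is a NO instance, but after giving $a$ a single child $a'$ at depth $2$ the set $\{b,a'\}$ becomes a feasible solution. The issue is that a path lets you ``defer'' the defense of $a$ to a deeper (and hence less congested) level. This matters for your argument because in the NO direction you extract $N$ as the level-$j$ ancestors of the class-$j$ ball centers, which requires every ball center (a leaf) to lie at depth at least $j$; that is exactly what leaves-at-uniform-depth guarantees. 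The fix is standard but worth stating: pad each shallow leaf with a full $(m+1)$-ary tree (binary, since $m=1$), so deferring the defense of that leaf past its original depth would force more than $m$ selections on some level; alternatively observe that the RMFC-T instances produced by the known hardness reduction already have all leaves at a common depth, which is also what the paper implicitly relies on when it sets $P=L_h$. With that clarification your proof is correct.
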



Note that this theorem has already been proved in \cite{ChakrabartyGK16}. However, it is not straightforward to show that the instances of NUkC they construct are perturbation resilient. Using a similar construction, we will argue that the instances of NUkC to which the instances of RMFC-T map are perturbation resilient. However, to ensure that the constructed instance of NUkC has a unique optimal solution, we will consider the Unambiguous version of RMFC-T. 

\subsection{Proof of Theorem \ref{th:nukchardtree}}\label{subsec:nukchardtree}
To prove the theorem we show a reduction from U-RMFC-T. As mentioned before, the reduction is similar to the reduction used by Chakrabarty et al. \cite{ChakrabartyGK16}. The construction is as follows. Let $h$ be the height of the tree. We set $P$ to be the leaves of the given tree $T$, i.e., $P=L_h$. For any edge $(u,v)$ of $T$ such that $u\in L_{h}$ and $v\in L_{h-1}$, assign a weight $(\gamma+1)/2$ to $(u,v)$. For any edge $(u,v)$ of $T$ such that $u\in L_{i}$ and $v\in L_{i-1}$ for $i \le h-1$, assign a weight $((\gamma+1)^{h-i+1}-(\gamma+1)^{h-i})/2$ to $(u,v)$. Then the distance function $d$ is the shortest-path metric on $P$ induced by the weights of $T$. We set $t=h$, $r_t=0$ and for any $1\le j< t$, $r_{j}=(\gamma+1)^{t-j}$. Also $k_1=\ldots=k_t=1$. Now we have the following observation.

\begin{observation}[$*$]
\label{obs:ances}
 For any two leaves $u,u'$ with a common ancestor $v\in L_j$, $d(u,u') \le r_j$.
\end{observation}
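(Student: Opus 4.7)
The plan is to upper bound $d(u,u')$ by the length of the unique $u$-to-$u'$ path through $v$ in $T$, and then show by a telescoping computation that this length equals exactly $r_j = (\gamma+1)^{h-j}$.

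First I would note that since $d$ is the shortest-path metric on $P = L_h$ induced by the weights of $T$, and since the tree path from $u$ to $u'$ via their common ancestor $v \in L_j$ is one valid path between them in the underlying complete graph on $P$ (its length being the sum of weights along the two vertical segments $u \leadsto v$ and $v \leadsto u'$), we have $d(u,u') \le \text{weight}(u \leadsto v) + \text{weight}(v \leadsto u')$. By symmetry the two terms are computed identically, so it suffices to show that the weight of the path from any leaf $u$ up to an ancestor $v \in L_j$ is $\tfrac{1}{2}(\gamma+1)^{h-j}$.

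Next I would add up the edge weights along $u \leadsto v$. The bottom-most edge (from $L_h$ to $L_{h-1}$) contributes $(\gamma+1)/2$, and each remaining edge, going from $L_i$ to $L_{i-1}$ for $i = h-1, h-2, \ldots, j+1$, contributes $\tfrac{1}{2}\bigl((\gamma+1)^{h-i+1} - (\gamma+1)^{h-i}\bigr)$. After substituting $l = h-i$ this becomes
\[
\frac{\gamma+1}{2} \;+\; \sum_{l=1}^{h-j-1} \frac{(\gamma+1)^{l+1} - (\gamma+1)^{l}}{2},
\]
and the sum telescopes to $\tfrac{1}{2}\bigl((\gamma+1)^{h-j} - (\gamma+1)\bigr)$. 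Adding back the bottom edge yields $\tfrac{1}{2}(\gamma+1)^{h-j}$, as claimed. Doubling this gives $d(u,u') \le (\gamma+1)^{h-j} = r_j$, since $t = h$ and $r_j = (\gamma+1)^{t-j}$.

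There is no real obstacle here; the only thing to be careful about is getting the index bookkeeping right (the bottom edge has a different formula from the rest, and the telescoping interval is $l \in \{1, \ldots, h-j-1\}$). The construction was clearly designed precisely so that this sum comes out to exactly $r_j/2$, which is what makes the reduction work later when arguing that a solution of the RMFC-T instance produces a placement of dilation $1$ in the NUkC instance.
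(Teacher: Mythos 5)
Your proof is correct and is essentially the same argument the paper gives: bound $d(u,u')$ by the length of the tree path through the common ancestor $v$, then sum the edge weights and observe the telescoping. The paper simply writes out the telescoping sum explicitly rather than reindexing via $l = h - i$; the content is identical.
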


We note that the weight of any edge is bounded by $(\gamma+1)^{h-1}=c^{O(n^ch)}$ and thus can be represented using $O(n^ch)$ number of bits. It follows that the construction can be done in polynomial-time. We denote the constructed instance of NUkC by $I$. For simplicity, we use the terms point and leaf interchangeably. The following lemma completes the proof of Theorem \ref{th:nukchardtree} which follows from the construction and the fact that the feasible solutions for $T$ are pairwise equivalent. 

\begin{lemma}
[$*$] 
\label{lem:gammahard}
If $T$ is the ``YES'' case of Lemma \ref{th:fire}, then the optimum dilation of $I$ is 1. If $T$ is the ``NO'' case of Lemma \ref{th:fire}, then the optimum dilation of $I$ is more than $\gamma$. Moreover, $I$ has a unique optimal clustering. 
\end{lemma}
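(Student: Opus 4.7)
The plan is to extract a clean dictionary between U-RMFC-T solutions with $m=1$ and low-dilation NUkC placements of $I$. The backbone computation, obtained by telescoping the edge-weight sum, is that every leaf $u$ and every ancestor $v\in L_j$ satisfy $d(u,v) = (\gamma+1)^{h-j}/2$; consequently $d(u,u') = (\gamma+1)^{h-l} = r_l$ whenever two leaves $u,u'$ have lowest common ancestor at level $L_l$. Observation \ref{obs:ances} is the ``$\le$'' direction of this identity, and it is what the YES direction uses; the NO direction uses the exact value.

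\textbf{YES case.} Given a feasible $N$ with $|N\cap L_j|\le 1$, let $v_j$ denote the unique node in $N\cap L_j$ (when present), choose any leaf $c_j$ descending from $v_j$, and open a ball of radius $r_j$ at $c_j$; for $j=h$ the node $v_h$ itself is a leaf and the $r_h=0$ ball is the singleton $\{v_h\}$. By Observation \ref{obs:ances}, $B(c_j,r_j)$ contains every leaf of the subtree $T_{v_j}$, and since every root-leaf path meets $N$, every leaf is covered. This produces a dilation-$1$ placement that respects the multiplicities $k_j=1$.

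\textbf{NO case (contrapositive).} Suppose $I$ admits a placement of dilation $\alpha\le \gamma$. For each $j$, write $B_j = B(c_j,\alpha r_j)$ for the (at most one) $r_j$-ball. If $u\in B_j$ has LCA with $c_j$ at level $L_l$, the exact distance formula gives $(\gamma+1)^{h-l} = d(u,c_j)\le \alpha r_j\le \gamma(\gamma+1)^{h-j}$, which rearranges to $(\gamma+1)^{j-l}\le \gamma$; since $\gamma+1>\gamma$ this forces $l\ge j$. Hence every leaf in $B_j$ is a descendant of the level-$j$ ancestor $v_j$ of $c_j$, and the set $N := \{v_j : 1\le j\le t\}$ has $|N\cap L_j|\le 1$ and meets every root-leaf path. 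This contradicts the NO case, so the optimum dilation is strictly greater than $\gamma$.

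\textbf{Uniqueness.} Specializing the NO-case argument to any optimal placement of $I$ (which has dilation $1\le \gamma$) shows that its clusters correspond to the leaf-sets $\{\mathrm{leaves}(T_{v_j}) : v_j\in N\}$ for some U-RMFC-T solution $N$. The U-RMFC-T equivalence guarantee then forces this collection of leaf-subsets to be the same for every optimal $N$, so the induced NUkC partition of $P$ is the same for every optimal placement. The main obstacle I foresee is this last uniqueness step: one must verify that when two nodes of $N$ stand in ancestor-descendant relation, no genuinely different optimal partition arises from reassigning leaves lying below several $v_j$'s. I expect the geometric gap between consecutive radii $r_j, r_{j-1}$ to force each such leaf to be assigned to its deepest covering ancestor in $N$, canonicalizing the partition and completing the uniqueness argument.
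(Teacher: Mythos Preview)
Your arguments for the YES and NO directions are correct and match the paper's proof essentially line for line: the forward direction places an $r_j$-ball below each $v_j\in N$ and invokes Observation~\ref{obs:ances}, and the reverse direction uses the inequality $\gamma r_j<r_{j-1}$ (equivalently your $(\gamma+1)^{j-l}\le\gamma\Rightarrow l\ge j$) to push each ball's coverage into a single level-$j$ subtree.

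The gap is in the uniqueness step, and your proposed resolution does not work. At dilation exactly $1$, the ball $B(c_i,r_i)$ has radius $(\gamma+1)^{h-i}$, which \emph{equals} the distance from $c_i$ to every leaf whose lowest common ancestor with $c_i$ is $v_i$; hence $B(c_i,r_i)$ already contains every leaf of $T_{v_i}$. So when $v_j$ is a descendant of $v_i$, each leaf below $v_j$ lies in both $B(c_i,r_i)$ and $B(c_j,r_j)$, and moving such a leaf from the $j$-cluster to the $i$-cluster keeps the dilation at $1$. No radius gap forces the deepest assignment; the gap only separates dilation $1$ from dilation $>\gamma$, not two dilation-$1$ partitions from each other.

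The correct lever is the U-RMFC-T hypothesis itself, not the metric. If some feasible $N$ contained $v_i$ together with a descendant $v_j$ with $\mathrm{leaves}(T_{v_j})\subsetneq\mathrm{leaves}(T_{v_i})$, then deleting from $N$ all nodes whose leaf-set equals $\mathrm{leaves}(T_{v_j})$ (each is a descendant of $v_i$, so deletion preserves feasibility) yields a feasible $N'$ whose leaf-set collection omits $\mathrm{leaves}(T_{v_j})$, contradicting the assumption that all feasible solutions are pairwise equivalent. Thus the distinct sets in $\{\mathrm{leaves}(T_v):v\in N\}$ are pairwise non-nested and hence partition $P$, which pins down the clustering. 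The paper's own proof asserts this conclusion in a single sentence (``it follows\ldots that these feasible solutions get mapped to a unique optimal clustering'') without spelling out why, so the concern you flagged is genuine; you just reached for the wrong tool to close it.
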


\subsection{Hardness of Perturbation Resilient Version of NUkC}
To show the hardness of the $\gamma$-perturbation-resilient version of NUkC, we prove that the constructed instances of U-NUkC in the reduction from U-RMFC-T to U-NUkC in tree metrics are $\gamma$-PR. 
First, we remind the reader of the tree metric $d^*$ we used there. We are given a parameter $\gamma$ and a tree $T_{\gamma}$ with height $h$ whose leaves are at the same distance from the root. The points in the metric space correspond to all the leaves of $T_{\gamma}$. Let $n$ be the number of leaves. Also, let $L_i$ be the nodes of $T_{\gamma}$ at level $i$ for $1\le i\le h$. For an edge $(u,v)$ of $T$ such that $u\in L_{h}$ and $v\in L_{h-1}$, we assign a weight $l(u,v)=(\gamma+1)/2$ to $(u,v)$. For each $u\in L_i$, $v\in L_{i-1}$ for $i \le h-1$ such that $(u,v)$ is an edge in $T_{\gamma}$, we assign a weight $l(u,v)=((\gamma+1)^{h-i+1}-(\gamma+1)^{h-i})/2$. For any two leaves $w,w'$, $d^*(w,w')$ is the length of the shortest path between $w$ and $w'$, i.e., if the least common ancestor of $w,w'$ is in $L_j$, then $d^*(w,w')=(\gamma+1)^{h-j}$. We set $t=h$, $r_t=0$ and for any $1\le j< t$, $r_{j}=(\gamma+1)^{t-j}$. Also, $k_1=\ldots=k_t=1$. Let $L(\gamma)$ be the set of leaves of $T_{\gamma}$. As the distance between any two points and the $r_j$'s are of the form $(\gamma+1)^{i}$ for some $i$, we have the following observation. 

\begin{observation}\label{obs:optdilation}
 The optimal dilation of the instance $\I=\{L(\gamma),d^*,t\}$ is $(\gamma+1)^{i}$ for some integer $i\ge 0$.
\end{observation}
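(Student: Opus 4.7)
The plan is to combine the discrete structure of the tree metric with Lemma~\ref{lem:gammahard}.

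For discreteness, I would fix any placement $\pi=(c_1,\ldots,c_t)$ and examine its dilation $\max_p\min_j d^*(c_j,p)/r_j$. By construction, $d^*(c_j,p)$ is either $0$ or $(\gamma+1)^{h-l}$, where $l$ is the level of the least common ancestor of $c_j$ and $p$, and for $j<t$ we have $r_j=(\gamma+1)^{h-j}$. So each ratio $d^*(c_j,p)/r_j$ with $j<t$ is either $0$ or an integer power $(\gamma+1)^{j-l}$. The ball with $r_t=0$ only covers its own center, contributing $0$ when $c_t=p$ and otherwise dropping out of the $\min_j$. Hence the dilation of $\pi$ lies in $\{0\}\cup\{(\gamma+1)^i:i\in\mathbb{Z}\}$, and since the set of placements is finite, the optimum is $\alpha^*=(\gamma+1)^{i^*}$ for some integer $i^*$.

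To show $i^*\ge 0$, I would invoke Lemma~\ref{lem:gammahard}. In the YES case, $\alpha^*=1=(\gamma+1)^0$, so $i^*=0$. In the NO case, $\alpha^*>\gamma$; combined with discreteness and $(\gamma+1)^i\le 1<\gamma$ for every integer $i\le 0$ (we work in the regime $\gamma>1$), this forces $i^*\ge 1$. Either way, $i^*\ge 0$.

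The main obstacle is precisely this non-negativity. Pure discreteness yields only $i^*\in\mathbb{Z}$; it is easy to imagine degenerate tree instances where a ``one-level shift'' of radii, absorbing the slack with the singleton $r_t=0$ ball, produces a placement of dilation $(\gamma+1)^{-1}$. The conclusion $i^*\ge 0$ therefore genuinely relies on the YES/NO dichotomy from Lemma~\ref{lem:gammahard}, which forbids $\alpha^*$ from taking any value in the open interval $(0,1)$.
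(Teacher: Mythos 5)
Your proof is correct and goes further than the paper's own one-line justification, which observes only that the distances and the $r_j$'s are powers of $\gamma+1$. As you rightly note, that remark alone forces each placement's dilation into $\{0\}\cup\{(\gamma+1)^i:i\in\mathbb{Z}\}$ but does not control the sign of the exponent, and your worry about degenerate instances is not hypothetical: for a path-shaped tree with exactly two leaves the balls $r_t$ and $r_{t-1}$ already cover everything as singletons at dilation $0$. Invoking Lemma~\ref{lem:gammahard} to force $i^*\ge 0$ is the right completion and is non-circular, since that lemma is proved directly by translating RMFC-T solutions into placements and back without reference to this observation. One caveat you should be aware of: the appendix proof of Lemma~\ref{lem:gammahard}'s YES direction only exhibits a dilation-$1$ placement, i.e.\ shows $\alpha^*\le 1$, and never argues $\alpha^*\ge 1$, so your derivation quietly inherits that gap. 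A fully self-contained treatment would also record the structural fact that the 1-in-3SAT gadget trees have far more than $t$ leaves, which rules out dilation $<1$: at dilation $(\gamma+1)^{-1}$ the two smallest radius classes collapse to singleton covers and the remaining balls effectively shift down one level, so they cannot cover the (exponentially many) leaves.
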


As we have shown before, for any constant $c$ and any $\gamma\le c^{n^c}$, U-NUkC is hard to approximate within a factor of $\gamma$ for the metric space $(T_{\gamma},d^*)$, unless $\np=\rp$. Next, we prove the following lemma.

\begin{lemma}
The instance $\I=\{L(\gamma),d^*,t\}$ is $\gamma$-PR. 
\end{lemma}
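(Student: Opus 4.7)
The plan is to exploit two discrete features of the constructed instance: (a) every positive pairwise distance in $d^*$ equals $(\gamma+1)^i$ for some $1 \le i \le h$, and (b) Observation~\ref{obs:optdilation} guarantees that the optimal dilation $\alpha^*$ of $\I$ itself has the form $(\gamma+1)^s$ for some integer $s \ge 0$ (in the YES case $s=0$; in the NO case $s\ge 1$). Together with Lemma~\ref{lem:gammahard}, which supplies a unique optimal clustering $\Op^*$ of $\I$, these two facts should let me round any optimal solution of an arbitrary $\gamma$-perturbation back to $\Op^*$.

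To begin, I would fix a metric $\gamma$-perturbation $d'$ of $d^*$, let $\alpha'$ be the optimal dilation of $\I' = (L(\gamma), d', t)$, and use the inequalities $d^*(u,v)/\gamma \le d'(u,v) \le d^*(u,v)$ to derive the sandwich $\alpha^*/\gamma \le \alpha' \le \alpha^*$: any placement feasible for $\I$ with dilation $\alpha$ is feasible for $\I'$ with the same dilation, and any placement feasible for $\I'$ with dilation $\alpha$ is feasible for $\I$ with dilation $\gamma\alpha$.

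Next, I would take an arbitrary optimal clustering $\Op'$ of $\I'$ with witnessing placement $\Pi' = \{B_{d'}(p_j, \alpha' r_j)\}_j$ and argue that the same centers $p_j$, reinterpreted under $d^*$, induce $\Op'$ with dilation $\alpha^*$. For any $q$ in the cluster $C_j$ covered by $B_{d'}(p_j, \alpha' r_j)$, the perturbation bound and $\alpha' \le \alpha^* = (\gamma+1)^s$ give $d^*(p_j,q) \le \gamma\, d'(p_j,q) \le \gamma\alpha' r_j \le \gamma(\gamma+1)^{s+h-j}$. Since every positive value of $d^*(p_j,q)$ is a power of $(\gamma+1)$ and $(\gamma+1)^{s+h-j+1} > \gamma(\gamma+1)^{s+h-j}$, this can be sharpened to $d^*(p_j,q) \le (\gamma+1)^{s+h-j} = \alpha^* r_j$. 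Hence $\Op'$ is a feasible, and therefore optimal, clustering of $\I$; by the uniqueness asserted in Lemma~\ref{lem:gammahard}, $\Op' = \Op^*$. Because this holds for \emph{every} optimal clustering of $\I'$, the optimum of $\I'$ is also unique and equals $\Op^*$, which is the definition of $\gamma$-perturbation resilience.

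The main obstacle I anticipate is the rounding step: since $\alpha'$ is not known to be a power of $(\gamma+1)$, the discreteness of $d^*$ cannot be applied directly to the $d'$-radius $\alpha' r_j$. My workaround is to first push the bound from $d'$ into $d^*$, incurring the apparent cost of a factor $\gamma$, and only then use $\alpha' \le \alpha^* = (\gamma+1)^s$ together with the power-of-$(\gamma+1)$ structure of $d^*$ to shave that factor back off. The algebraic identity $(\gamma+1)^{s+h-j+1} = \gamma(\gamma+1)^{s+h-j} + (\gamma+1)^{s+h-j}$ is exactly what makes this rounding tight, and nothing in the argument appeals to where the centers $p_j$ lie inside their clusters, so no circular use of Observation~\ref{obs:uniquecluster} for the perturbed instance is required.
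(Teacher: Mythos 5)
Your proposal is correct and follows essentially the same route as the paper: both exploit Observation~\ref{obs:optdilation} (the optimal dilation is a power of $\gamma+1$), the fact that every positive distance in $d^*$ is a power of $\gamma+1$, and the gap $(\gamma+1)^{i+1} > \gamma(\gamma+1)^i$ to shave off the factor $\gamma$ incurred when pushing a $d'$-radius bound back into $d^*$, after which uniqueness of the optimal clustering of $\I$ forces $\Op' = \Op^*$. If anything your version is a little cleaner: the paper asserts a pairwise bound $d'(p,q)\le \alpha r_j$ for all $p,q\in C$ (with a somewhat opaque justification), whereas you only ever need, and only ever use, the center-to-point bound $d'(p_j,q)\le \alpha' r_j$, which is exactly what the ball definition delivers.
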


\begin{proof}
Let $\Op$ be the optimal clustering of $\I$ and $\alpha$ be its dilation. Consider any $\gamma$-perturbation $d'$ of $d^*$. We prove that the optimal clustering ${\Op}'$ of the instance ${\I}'=\{L(\gamma),d',t\}$ is same as $\Op$. Suppose for the sake of contradiction that ${\Op}'$ is not same as $\Op$. 
As $d'$ is a $\gamma$-perturbation  (the distances are non-increasing), the dilation of ${\Op}'$ is at most $\alpha$. We show that ${\Op}'$ is also a feasible clustering for $\I$ with dilation at most $\alpha$. 
%
%

Consider any non-singleton cluster $C\in {\Op}'$ with center $c_1$ that is covered by an $r_j$-ball for $j < t$. 
Then, 
for all pairs of points $p,q \in C$, $d'(p,q) \le \alpha r_j$. This is true, as all the points are leaves of the tree. From Observation \ref{obs:optdilation}, it follows that $\alpha r_j=(\gamma+1)^{i}$ for some $i$. As $d'$ is a $\gamma$-perturbation of $d^*$, $d^*(p,q) \le \gamma\cdot d'(p,q) < (\gamma+1)^{i+1}$. Now, the way $T_{\gamma}$ is constructed, there is no distance values strictly between $(\gamma+1)^{i}$ and $(\gamma+1)^{i+1}$. Hence, $d^*(p,q) \le (\gamma+1)^{i}=\alpha r_j$, and the ball $B(c_1,\alpha r_j)$ covers the points of the cluster $C$ w.r.t.  $d^*$. 
It follows that ${\Op}'$ is also a feasible clustering for $\I$ with dilation at most $\alpha$. But, as per our assumption $\Op$ and ${\Op}'$ are different, and thus the optimal clustering of $\I$ is not unique. This is a contradiction, and hence $\Op$ and ${\Op}'$ must be same. 
%
%
%
\end{proof}


\subsection{Hardness in Euclidean Metric}

\begin{theorem}\label{th:nukchardeuclid}
 For any constant $\kappa$ and any $\beta\le {\kappa}^{n^{\kappa}}$, NUkC under $\beta$-PR is hard to approximate within a factor of $\beta$ in the Euclidean metric of dimension $d$ for any $d\ge 1$, unless $\np=\rp$.
\end{theorem}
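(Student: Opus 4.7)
The plan is to reduce from the tree-metric hardness of Theorem~\ref{th:nukchardpr} by composing with Gupta's classical embedding of tree metrics into Euclidean space~\cite{Gupta00}. For every $d\ge 1$, Gupta's theorem gives a polynomial-time computable embedding $f$ of any $n$-point tree metric into $\real^d$ whose distortion $D=D(n,d)$ is polynomial in $n$ for every fixed dimension. I will normalize $f$ so that
\[
d^*(p,q)\le \|f(p)-f(q)\|_2 \le D\cdot d^*(p,q)\quad \text{for all } p,q.
\]
Given a target hardness factor $\beta\le \kappa^{n^{\kappa}}$, set $\gamma:=\beta\cdot D$. Since $D=\poly(n)$, there is a constant $c$ with $\gamma\le c^{n^c}$, so Theorem~\ref{th:nukchardpr} supplies a tree-metric U-NUkC instance $\I=(L(\gamma),d^*,t)$ satisfying $\gamma$-PR together with the YES/NO gap of the previous subsection. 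Define the Euclidean instance $\I_E=(f(L(\gamma)),d_E,t)$ with the same radii $r_1,\ldots,r_t$ and the same multiplicities $k_1,\ldots,k_t$.

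Next I would verify the hardness gap. Since $d^*\le d_E\le D\cdot d^*$, a tree ball of radius $R$ around a center $p$ is contained in the Euclidean ball of radius $DR$ around $f(p)$, and conversely a Euclidean ball of radius $R$ around $f(p)$ is contained in the tree ball of radius $R$ around $p$. Hence in the YES case the dilation-$1$ tree placement pushes forward to a Euclidean placement of dilation at most $D$, while in the NO case any Euclidean placement of dilation $\alpha$ pulls back to a tree placement of the same dilation using the same centers, so every Euclidean placement has dilation $>\gamma$. This gives a multiplicative gap of $\gamma/D=\beta$ between the two cases, which together with the $\np$-vs-$\rp$ hardness of distinguishing them in Lemma~\ref{th:fire} yields $\beta$-inapproximability for NUkC in the Euclidean metric.

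The main obstacle is verifying that $\I_E$ is itself $\beta$-PR. For any metric $\beta$-perturbation $d'$ of $d_E$ we have $d_E/\beta\le d'\le d_E$, which combined with the embedding inequalities yields
\[
\frac{d^*}{\beta D}\le \frac{d'}{D}\le d^*.
\]
Thus $d'/D$ is a metric $(\beta D)$-perturbation of $d^*$, and since $\beta D=\gamma$ it is a $\gamma$-perturbation of $d^*$. By the $\gamma$-PR property of $\I$ established earlier in Section~\ref{sec:hardness}, the unique optimal NUkC clustering of $(L(\gamma),d'/D,t)$ is the optimal clustering $\Op$ of $\I$. Uniformly scaling a metric by the positive constant $D$ does not change which clustering is optimal (it merely rescales the optimal dilation, as in the proof of Lemma~\ref{lem:dila1}), so the optimal clustering of $(L(\gamma),d',t)$ is also $\Op$; transporting through the bijection $f$, the optimal clustering of $\I_E$ under $d'$ is $f(\Op)$. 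Specializing $d'=d_E$ identifies the unique optimal clustering of $\I_E$, and allowing $d'$ to range over all metric $\beta$-perturbations of $d_E$ shows that $\I_E$ is $\beta$-PR. Combined with the hardness gap above, this completes the proof of Theorem~\ref{th:nukchardeuclid}.
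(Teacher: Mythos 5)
Your proof is correct and rests on the same core tool as the paper's own argument, namely Gupta's polynomial-time embedding of tree metrics into $\mathbb{R}^d$ (Theorem~\ref{th:embed}). The difference is in framing and in one substantive point of rigor. The paper argues by contrapositive: it assumes a polynomial-time $\beta$-approximation for Euclidean $\beta$-PR instances, embeds the tree instance, runs the hypothetical algorithm, pulls the solution back, and derives a $\gamma$-approximation for the tree $\gamma$-PR instance, contradicting Theorem~\ref{th:nukchardpr}. You instead present the reduction in the forward direction — start from the hard tree instance $\I=(L(\gamma),d^*,t)$ of Theorem~\ref{th:nukchardpr}, embed it, and show directly that the Euclidean instance has the YES/NO dilation gap of $D$ versus $>\gamma=\beta D$. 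These are the same computation (your gap $\gamma/D=\beta$ is the paper's inequality $\alpha(S')\le\beta\cdot D(f)\cdot OPT\le\gamma\cdot OPT$), just read in opposite directions. Where you genuinely add something is the explicit verification that $\I_E$ is $\beta$-PR: you show that any $\beta$-perturbation $d'$ of $d_E$, after pulling back through the bijection $f$ and dividing by $D$, sandwiches as $d^*/(\beta D)\le d'/D\le d^*$, hence is a $\gamma$-perturbation of $d^*$, and then appeal to the $\gamma$-PR of $\I$ and scale-invariance of the optimal clustering (as in Lemma~\ref{lem:dila1}). The paper's appendix proof applies the hypothetical Euclidean algorithm to the embedded instance without checking that this instance actually lies in the algorithm's promise class ($\beta$-PR), a step that its contrapositive formulation does require; your argument fills that gap cleanly. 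The growth bookkeeping ($\gamma=\beta D\le c^{n^c}$ because $D=\poly(n)$ and $\beta\le\kappa^{n^\kappa}$) is also handled correctly.
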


This result is in turn based on the following theorem due to Gupta \cite{Gupta00}.

\begin{theorem}\label{th:embed}
\cite{Gupta00}
 Any weighted tree $T$ with $L$ leaves can be embedded in polynomial-time into $d$-dimensional Euclidean space with $O(dL^{1/(d-1)}\min\{\log L,d\}^{1/2})$ distortion.
\end{theorem}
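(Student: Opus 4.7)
}

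The plan is to build the embedding recursively via a caterpillar/heavy-path decomposition of $T$, assigning orthogonal subspaces to the ``light'' subtrees that branch off each heavy path, and then carefully bounding the distortion by summing the contribution from each recursion level. First I would normalize the instance: subdivide weighted edges into unit-weight edges (possibly with a blow-up in combinatorial size but preserving distances and the leaf set after re-identifying original leaves) and add zero-weight edges to convert $T$ into a binary tree if desired. The set of leaves stays of size $L$, so any distortion bound we prove for this reduced tree transfers back.

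Next I would perform a heavy-path decomposition: at each internal node $v$, mark the child whose subtree contains the most leaves as heavy, and call the others light. Following heavy pointers from the root partitions $T$ into a family of edge-disjoint paths (heavy paths), and each light edge leaves a subtree of at most half the leaves of its parent, so a light-edge chain has length $O(\log L)$. The embedding $\phi : T \to \real^d$ would then be defined recursively: embed the topmost heavy path isometrically along a unit axis direction $e_1$; for each vertex $v$ on this spine, take its at most $c$ light subtrees and assign each of them a distinct ``spreading direction'' chosen from a fixed set $S \subset \real^{d-1}$ of vectors (lying in the $(d-1)$-dimensional subspace orthogonal to $e_1$). Recursively embed each light subtree, rotate it so its root-direction lies along the chosen spreading direction, and translate so its root coincides with $\phi(v)$.

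The heart of the analysis is to choose $S$ well and control distortion level-by-level. To fit all branching factors using only $d$ dimensions, I would let $|S|$ grow as roughly $L^{1/(d-1)}$ per level, placing the vectors of $S$ as vertices of a near-regular spherical code on $S^{d-2}$ so that any two of them subtend an angle bounded away from zero; this uses the existence of $\Theta(L^{1/(d-1)})$-sized codes in $S^{d-2}$ with pairwise angle $\Omega(1)$. Distances along a single heavy path are preserved exactly. For two leaves whose lowest common ancestor lies on a heavy path at some recursion level, the Euclidean distance decomposes into a ``spine'' contribution (preserved) and a ``spread'' contribution from the at most $\min\{\log L, d\}$ recursion levels above them; the triangle inequality gives an upper bound, and the angular separation of spreading directions together with orthogonality of successive subspaces gives a matching lower bound up to a $\sqrt{\min\{\log L, d\}}$ factor coming from summing squared coordinates across levels. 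Combining the per-level $L^{1/(d-1)}$ spread with the $\sqrt{\min\{\log L, d\}}$ level aggregation and an $O(d)$ overhead from translating between spherical-code separation and actual Euclidean distance yields the claimed $O(d L^{1/(d-1)} \min\{\log L, d\}^{1/2})$ distortion.

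The main obstacle will be the lower bound on distances: upper bounds via triangle inequality come almost for free, but showing that leaves in different light subtrees remain well separated in $\real^d$ requires that the spreading directions used for sibling subtrees be both mutually well-separated and nearly orthogonal to the directions used at higher recursion levels. The technical work is in maintaining this approximate orthogonality through $\Theta(\min\{\log L, d\})$ recursive embeddings while not letting the code size $|S|$ per level drop below $L^{1/(d-1)}$; I would handle this by explicitly using the Johnson-Lindenstrauss-style subspace allocation at each level (splitting the $d-1$ orthogonal dimensions either into $\Theta(\log L)$ nearly-orthogonal blocks when $\log L \le d$ or using a single block of width $d$ and relying on angular separation when $d < \log L$), and prove the two regimes give the $\min\{\log L, d\}^{1/2}$ bound. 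Finally, polynomial running time is immediate since the decomposition, the spherical codes, and the recursive coordinate assignment are all computable in $\poly(|T|, d)$ time.
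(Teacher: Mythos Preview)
The paper does not prove Theorem~\ref{th:embed}; it is quoted verbatim from Gupta~\cite{Gupta00} and used only as a black box in the proof of Theorem~\ref{th:nukchardeuclid}. There is therefore no ``paper's own proof'' to compare your proposal against.

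That said, a brief remark on your reconstruction of Gupta's argument: the high-level architecture (caterpillar/heavy-path decomposition, embed each heavy path isometrically along a direction, spread light subtrees using well-separated direction vectors, recurse) is indeed the skeleton of Gupta's proof. Two technical points in your write-up are shaky, though. First, ``subdivide weighted edges into unit-weight edges'' is not a legitimate normalization step for general real edge weights and is in any case unnecessary; Gupta works with weighted trees directly. Second, your accounting of where the $L^{1/(d-1)}$ factor comes from is off: along a single heavy path there can be up to $L$ light subtrees branching off, not $L^{1/(d-1)}$, so you cannot simply pick one well-separated direction per light subtree from a code of size $L^{1/(d-1)}$. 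In Gupta's actual argument the $L^{1/(d-1)}$ arises from a packing bound on how many subtree images can be placed in $\real^{d-1}$ at controlled scales before two of them get too close, combined with the $O(\log L)$ caterpillar depth; your sketch collapses these two ingredients in a way that would not yield the stated bound if carried out literally.
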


The idea is to show that if there is a polynomial-time $\beta$-approximation for NUkC under $\beta$-PR in the Euclidean metric for any constant $\kappa$ and any $\beta\le {\kappa}^{n^{\kappa}}$, then there is also a polynomial-time $\gamma$-approximation for NUkC under $\gamma$-PR in tree metrics for any $\gamma\le c^{n^c}$, where $c$ is a constant. But, by Theorem \ref{th:nukchardpr} this is a contradiction, and hence the proof of the theorem follows. To obtain the $\gamma$-approximation in tree metrics we embed the tree metric into Euclidean metric of dimension $d$ using the algorithm of Theorem \ref{th:embed}. Then, we use the algorithm for Euclidean metric to obtain a solution for the embedded instance. Lastly, we map this solution back to the tree metric with sufficient expansion of the balls. For a suitable choice of $\beta$, one can show that the constructed solution is a $\gamma$-approximation. The details are given in the Appendix.

\bibliographystyle{plainurl}
\bibliography{lipics-v2019-sample-article}

\newpage
\appendix

\section{Examples demonstrating the definition of perturbation resilience}\label{ap:examples}

\begin{figure}[t]
\centering
\includegraphics[width=.55\linewidth]{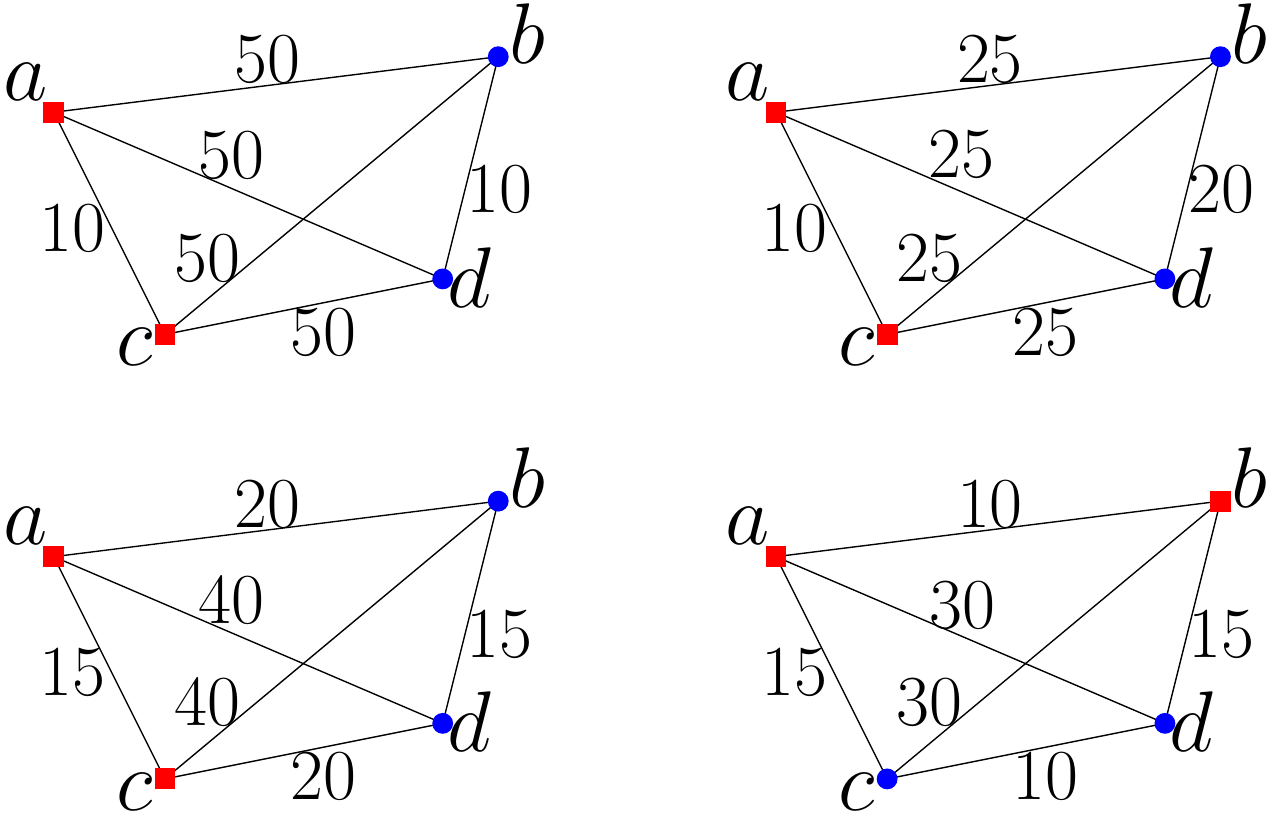}
\caption{Examples demonstrating the definition of perturbation resilience. The top-right (resp. bottom-right) instance is a 2-perturbed instance of the top-left (resp. bottom-left) instance. The points in same optimal cluster are shown by same shape and color.}
\label{fig:pr}
\end{figure}

For more clarity, we describe the notion of $\psi$-perturbation-resilience in the context of NUkC using two examples in Figure \ref{fig:pr} (top-left and bottom-left). In all our examples, the number of clusters $k=2$ and the number of radii classes $t=1$. For the instance shown at the top-left figure, let $r_1=10$. We claim that this instance is $2$-perturbation-resilient. To see this note that here the optimal dilation is 1, and the optimal clusters are $\{a,c\}$ and $\{b,d\}$. Moreover, even if all the distances are perturbed by a factor of 2, the distance between $a$ and $c$ (resp. $b$ and $d$) can be at most 10. Hence, the dilation of the previous clustering for the perturbed instance would be at most $1$. But, as all the distances between $a$ and $b$, $a$ and $d$, $c$ and $b$, and $c$ and $d$ are 50, in any 2-perturbation of the distances, the distance between the two points in any of these four pairs would be at least 25. Thus if both of the points in such a pair remain in same cluster, the dilation must be at least 2.5. As there is a clustering of dilation at most 1, in optimal clustering, both of these points cannot lie in the same cluster. Hence, the optimal clustering is unique and same as the one before. The top-right figure shows a 2-perturbed instance with the same optimal clustering. Now, consider the instance in the bottom-left figure. Let $r_1=15$. We claim that this instance is not 2-perturbation-resilient. To prove this we show a 2-perturbed instance where the optimal clustering is different. Note that in the original instance, the optimal dilation is 1, and the optimal clusters are $\{a,c\}$ and $\{b,d\}$. The 2-perturbed instance we consider is shown in the bottom-right figure. Note that in the perturbed instance the optimal clustering is $\{\{a,b\},\{c,d\}\}$ with dilation 10/15=2/3. This is because any other clustering has a dilation at least 1. 

\section{Proof of Observation \ref{obs:uniquecluster}}
\begin{proof}
\begin{itemize}
 \item Suppose $p$ belongs to the cluster $C'$ such that $C\ne C'$. Construct another clustering ${\Op}'$ by selecting all the clusters in $\Op$ except $C$ and $C'$, and the clusters $C\cup \{p\}$ and $C'\setminus \{p\}$. It is not hard to see that ${\Op}'$ is also a feasible clustering induced by $\Pi$. As $\Pi$ is an optimal placement, ${\Op}'$ is also an optimal clustering, which contradicts the uniqueness of the optimal clustering of $\I$. Hence, the statement follows.  
 
 \item Suppose $B$ contains both $u$ and $v$. We construct a new clustering $\Op'$, which is identical to ${\Op}$ except, in ${\Op}'$, we move the points $u,v$ to the cluster $C$. Note that the clustering ${\Op}'$ can be induced by the placement $\Pi$, as the ball $B$ that covers $C\in {\Op}$ also contains $u,v$. Hence, ${\Op}'$ is an optimal clustering for $\I$ different than ${\Op}$, which is a contradiction, and thus the statement follows.\end{itemize}\end{proof}

\section{Proof of Lemma \ref{lem:dila1}}

\begin{proof}
Consider any instance $\I=(P,d,t)$ of the NUkC problem with $t$ radii classes under $\psi$-PR (resp. $(\psi,\epsilon)$-PR). Let $\alpha$ be the optimal dilation. Note that we do not know the value of $\alpha$. However, as the input metric is finite, there are only polynomial number of guesses for $\alpha$. We use the following procedure to obtain the optimal clustering for $\I$. In each step, we guess a value $\alpha'$ for the optimal dilation in the increasing order of the values. 
We construct a new instance $\I'$ from $\I$ by only changing the radius $r_i$ to $\alpha'\cdot r_i$ for all $i$. 
Then, we apply the algorithm $\A$ on the constructed instance. If $\A$ returns ``no'', we repeat the process with a different guess. Otherwise, the procedure terminates. We return the same clustering returned by $\A$ as the solution for the instance $\I$. 
%
%

Now, we argue about the correctness of the procedure. 
First, we claim that $\I'$ is a $\psi$-PR (resp. $(\psi,\epsilon)$-PR) instance. Before proving this claim we discuss its consequences.   
Note that if there is no feasible solution for $\I$ with dilation $\alpha'$, then with $k_i$ balls of radius $\alpha'\cdot r_i$ for all $i$ it is not possible to cover the input points. Hence, in this case, for the constructed instance, there is no feasible solution with dilation 1. Thus, the algorithm correctly returns ``no'' assuming $\I'$ is a $\psi$-PR (resp. $(\psi,\epsilon)$-PR) instance. 
If there is a feasible solution for $\I$ with dilation $\alpha'$, then with $k_i$ balls of radius $\alpha'\cdot r_i$ for all $i$ one can cover the input points. 
Thus, in that case, for the constructed instance, there is a feasible solution with dilation 1. 
%
Hence, $\A$ correctly returns ``yes'' assuming $\I'$ is $\psi$-PR (resp. $(\psi,\epsilon)$-PR). Thus, when $\alpha'=\alpha$, $\A$ returns ``yes'' and the returned clustering is optimal for $\I$. Now, we prove the claim.     
\begin{claim}
 $\I'$ is a $\psi$-PR (resp. $(\psi,\epsilon)$-PR) instance.
\end{claim}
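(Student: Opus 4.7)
The plan is to show that scaling the radii in an NUkC instance scales the dilation of every feasible placement by the same factor while leaving the set of feasible clusterings (as partitions of $P$) unchanged. Concretely, for any metric $d_2$ on $P$, a placement $\Pi$ of $k_i$ balls of radii $\beta \cdot r_i$ in the instance $(P,d_2,t)$ (with radii $r_1,\dots,r_t$) is feasible if and only if the same balls, viewed as $k_i$ balls of radii $(\beta/\alpha') \cdot (\alpha' r_i)$, form a feasible placement in the instance $(P,d_2,t)$ with radii $\alpha' r_1,\dots,\alpha' r_t$. Hence the set of feasible placements is in bijection, and the corresponding dilations differ by exactly the factor $1/\alpha'$. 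In particular, a clustering is optimal for one version iff it is optimal for the other, and uniqueness of the optimal clustering is preserved.

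Using this observation, I would argue $\I'$ is $\psi$-PR as follows. Let $d_1$ be any metric $\psi$-perturbation of $d$. Consider the four instances obtained by combining the metric (either $d$ or $d_1$) with the radii (either $\{r_i\}$ or $\{\alpha' r_i\}$). By the scaling observation above applied separately with $d_2=d$ and with $d_2=d_1$, the optimal clusterings of the two instances that share a metric are identical. Since $\I=(P,d,t)$ with radii $\{r_i\}$ is $\psi$-PR by hypothesis, its unique optimal clustering coincides with the unique optimal clustering of $(P,d_1,t)$ with radii $\{r_i\}$. Chaining these three equalities of clusterings yields that the optimal clustering of $(P,d_1,t)$ with radii $\{\alpha' r_i\}$ is unique and equals that of $\I'=(P,d,t)$ with radii $\{\alpha' r_i\}$. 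Since $d_1$ was an arbitrary $\psi$-perturbation of $d$, this gives $\psi$-PR of $\I'$.

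For the $(\psi,\epsilon)$-PR case, the argument is essentially the same, only now the quantity being preserved under the scaling bijection is $\epsilon$-closeness of clusterings rather than exact equality. Since the scaling map from placements of $\I$ to placements of $\I'$ does not change the underlying partition of $P$, any two clusterings that are $\epsilon$-close for one version are $\epsilon$-close for the other. Combining this with the fact that $\I$ is $(\psi,\epsilon)$-PR, every optimal clustering of $(P,d_1,t)$ with radii $\{\alpha' r_i\}$ is $\epsilon$-close to every optimal clustering of $\I'$.

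The only subtlety, and the one step worth stating carefully, is the scaling bijection itself: one must check that feasibility is exactly invariant under the simultaneous substitution $r_i \mapsto \alpha' r_i$ and $\beta \mapsto \beta/\alpha'$, and that this induces a dilation-scaling bijection between feasible placements for the two versions (so optimality transfers). Everything else is bookkeeping over metrics and perturbations. I do not expect any serious obstacle: the claim is essentially the statement that the property of being PR depends only on the scale-free geometry of the radii, which follows because perturbation resilience constrains the set of optimal \emph{clusterings} while dilation is a purely scalar quantity.
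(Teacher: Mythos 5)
Your proof is correct and follows essentially the same approach as the paper: both rely on the observation that replacing $r_i$ by $\alpha' r_i$ puts feasible placements in bijection and rescales all dilations by $1/\alpha'$, so that the set (and hence uniqueness, and hence $\epsilon$-closeness) of optimal clusterings is unchanged, and then combine this with the hypothesis that $\I$ is $\psi$-PR (resp.\ $(\psi,\epsilon)$-PR). Your formulation states the scaling bijection as an explicit up-front principle and then chains equalities across the four instance combinations, which is a slightly cleaner organization than the paper's argument-by-contradiction, but the underlying content is identical.
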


\begin{proof}
First, we show that the optimal clustering of $\I'$ is unique. Note that the optimal dilation of $\I'$ is $\alpha/\alpha'$. Suppose optimal clustering of $\I'$ is not unique. Then, there are two different clusterings where the points can be covered using $k_i$ balls of radius $(\alpha/\alpha')\cdot \alpha'\cdot r_i=\alpha\cdot r_i$ from each class $i$. It follows that there are two different optimal clusterings for $\I$. But, this is a contradiction, and thus the optimal clustering of $\I'$ is unique. Note that the optimal clusterings of $\I$ and $\I'$ are identical. Let $\C$ be that clustering. Now, consider any $\psi$-perturbation $d_1$ of the input metric $d$ and the $\psi$ perturbed instance $\I_1'$ of $\I'$. Let $\I_1= (P,d_1,t)$ be the corresponding $\psi$ perturbed instance of $\I$. 
Also, let $\C_1'$ be the optimal clustering of $\I_1'$ with dilation $\alpha_1'$. For the sake of contradiction, suppose $\C_1'$ is not identical (resp. $\epsilon$-close) to $\C$. We argue that $\C_1'$ is also an optimal clustering of $\I_1$. But, this is a contradiction, as $\I_1$ is a $\psi$ perturbed instance of $\I$ and $\I$ is a $\psi$-PR (resp. $(\psi,\epsilon)$-PR) instance. Now, note that a placement that induces the clustering $\C_1'$ of $\I_1'$ uses $k_i$ balls of radius $\alpha_1'\cdot \alpha'\cdot r_i$ from each class $i$. Thus, $\C_1'$ is a clustering for $\I_1$ with dilation $\alpha_1'\cdot \alpha'$. It is sufficient to argue that this dilation is optimal for $\I_1$. Suppose the optimal dilation is $< \alpha_1'\cdot \alpha'$. Then, using $k_i$ balls of radius $< \alpha_1'\cdot \alpha'\cdot r_i$ from each class $i$ all the points can be covered. Hence, there is a clustering for $\I_1'$ with dilation $< \alpha_1'$, which is a contradiction, and hence the claim follows.  
\end{proof}
Finally, as the number of guesses for $\alpha$ is a polynomial, the procedure terminates in polynomial time.  
\end{proof}

\section{Proof of Lemma \ref{lem:well-separated}}

\begin{proof}
 Let $\Op$ be the optimal clustering induced by the placement $\Pi$. Also, let $B_1$ and $B_2$ be the balls that induce the clusters $C_1$ and $C_2$, respectively. For the sake of contradiction, suppose there exist two points $p\in C_1$, $q\in C_2$ such that $d(p,q) \le r_i$. The idea is to show that there is a metric $d_1$ that is a $2$-perturbation of $d$ such that $\I'=(P,d_1,t)$ has different optimal clustering than $\Op$. But, this is a contradiction, and thus the lemma follows. 
 
 Let $c_t$ be the center of the ball $B_t$ for $t\in \{1,2\}$. Then, $d(c_1,q)\le d(c_1,p)+d(p,q)\le 2r_i$. We define the distance function $d_1$ in the following way. First, we construct the complete graph with vertex set equal to $P$, and for any edge $(u,v)$, its length is defined by the function $l$.
%
 \[l(u,v) =
\left\{
	\begin{array}{ll}
		\min\{d(u,v),r_i\}  & \mbox{if } u=c_1 \text{ and } v=q\\
		d(u,v) & \mbox{otherwise} 
	\end{array}
\right.\]
We note that, for any $u,v$, $d(u,v)/2\le l(u,v)\le d(u,v)$. The distance function $d_1$ is defined by the shortest path distance between any pair of vertices. It is not hard to verify the following observation.

\begin{observation}
 $d_1$ is a metric $2$-perturbation of $d$. 
\end{observation}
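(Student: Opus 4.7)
The plan is to verify the three requirements implicit in the statement: that the edge weights $l$ lie in $[d(u,v)/2,\,d(u,v)]$, that these bounds lift to the shortest-path distance $d_1$, and that $d_1$ satisfies the metric axioms.

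First I would analyze $l$ edge by edge. For every pair $(u,v)$ different from $(c_1,q)$ we have $l(u,v)=d(u,v)$, so both bounds hold trivially. The only interesting case is $(c_1,q)$: by construction $l(c_1,q)=\min\{d(c_1,q),r_i\}\le d(c_1,q)$. For the lower bound I would invoke the inequality $d(c_1,q)\le 2r_i$ already derived at the top of the proof (using $d(c_1,p)\le r_i$ and $d(p,q)\le r_i$ together with the triangle inequality for $d$). A case split then finishes it: if $d(c_1,q)\le r_i$ then $l(c_1,q)=d(c_1,q)\ge d(c_1,q)/2$, and if $d(c_1,q)>r_i$ then $l(c_1,q)=r_i\ge d(c_1,q)/2$.

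Next I would lift these per-edge bounds to $d_1$. The upper bound $d_1(u,v)\le d(u,v)$ is immediate because the direct edge $(u,v)$ is a feasible path in the complete graph of length $l(u,v)\le d(u,v)$. For the lower bound, I would take any path $u=v_0,v_1,\ldots,v_m=v$ and use the edge-wise bound and the triangle inequality for $d$ to get
\[
\sum_{j=1}^{m} l(v_{j-1},v_j)\;\ge\;\tfrac{1}{2}\sum_{j=1}^{m} d(v_{j-1},v_j)\;\ge\;\tfrac{1}{2}\,d(u,v);
\]
taking the infimum over all paths yields $d_1(u,v)\ge d(u,v)/2$.

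Finally, I would check the metric axioms. Since $l$ is symmetric and nonnegative (the only modified edge inherits symmetry from $d$), the shortest-path function $d_1$ is automatically symmetric and satisfies the triangle inequality. Nondegeneracy, $d_1(u,v)=0\iff u=v$, follows from the lower bound $d_1(u,v)\ge d(u,v)/2$ together with the fact that $d$ is itself a metric. The one substantive step is the bound $l(c_1,q)\ge d(c_1,q)/2$, which rests entirely on the previously established estimate $d(c_1,q)\le 2r_i$; once that is in hand, everything else is routine shortest-path bookkeeping.
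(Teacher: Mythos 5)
Your proof is correct and follows the same route as the paper, which only records the per-edge bound $d(u,v)/2 \le l(u,v) \le d(u,v)$ and then asserts that the observation ``is not hard to verify.'' You have simply filled in the routine details: the case split establishing $l(c_1,q)\ge d(c_1,q)/2$ from $d(c_1,q)\le 2r_i$, the lift of the per-edge bounds to shortest-path distances via the triangle inequality for $d$, and the check that the shortest-path function on a symmetric nonnegative weighted complete graph is a metric.
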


Hence, the instance $\I'=(P,d_1,t)$ has the same optimal clustering $\Op$. Next, we prove a claim that the optimal dilation of $\I'$ is also 1. 

\begin{claim}
 The optimal dilation of $\I'$ is 1.
\end{claim}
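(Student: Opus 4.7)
The plan is to prove both bounds, $\alpha(\I')\le 1$ and $\alpha(\I')\ge 1$. The upper bound is immediate: since $l(u,v)\le d(u,v)$ for every pair, the shortest-path metric satisfies $d_1\le d$ pointwise, hence $B_d(c,R)\subseteq B_{d_1}(c,R)$, and the given optimal placement $\Pi$ of $\I$ (dilation $1$) is a feasible placement for $\I'$ as well.

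For the lower bound I would argue by contradiction. Suppose $\alpha(\I')=\alpha'<1$, and let $\Pi'$ be an optimal placement of $\I'$ witnessing this. Since $\I$ is $2$-PR and $d_1$ is a metric $2$-perturbation of $d$, the unique optimal clustering of $\I'$ coincides with $\Op$, so $\Pi'$ induces $\Op$ and Observation~\ref{obs:uniquecluster} applies to the balls of $\Pi'$. I may also assume $d(c_1,q)>r_i$, because otherwise $l(c_1,q)=d(c_1,q)$ and thus $d_1=d$, making the claim trivial. In this regime $l(c_1,q)=r_i$, so any path in the complete graph that uses the modified edge has length at least $r_i$.

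The heart of the proof is the sub-claim that no shortest $d_1$-path from the center of a ball of $\Pi'$ to a point in its covered cluster uses the edge $(c_1,q)$. Fix a ball $B=B_{d_1}(c_l',\alpha' r_{\tau(l)})$ of $\Pi'$ covering the cluster $C_l\in\Op$, where $c_l'\in C_l$ by the first part of Observation~\ref{obs:uniquecluster}. If the shortest $d_1$-path from $c_l'$ to some $u\in C_l$ did traverse $(c_1,q)$, then decomposing the path as either $c_l'\to c_1\to q\to u$ or $c_l'\to q\to c_1\to u$ with total length at most $\alpha' r_{\tau(l)}$ and using $l(c_1,q)=r_i$, one directly obtains $d(c_l',c_1)\le \alpha' r_{\tau(l)}-r_i$ or $d(c_l',q)\le \alpha' r_{\tau(l)}-r_i$, respectively; a single triangle step through the modified edge then puts \emph{both} $c_1$ and $q$ inside $B$. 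But $c_1\in C_1$ and $q\in C_2$ lie in distinct clusters of $\Op$, contradicting the second part of Observation~\ref{obs:uniquecluster} for the ball $B$ of the optimal placement $\Pi'$.

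Consequently $d_1(c_l',u)=d(c_l',u)$ for every ball of $\Pi'$ and every point it covers, so the very same centers and radii give a feasible placement of $\I$ with dilation $\alpha'<1$, contradicting $\alpha(\I)=1$. The hardest step is the symmetric case analysis ensuring that both $c_1$ and $q$ end up inside $B$ regardless of the direction in which the shortcut is traversed; once that is in place, Observation~\ref{obs:uniquecluster} closes the argument uniformly.
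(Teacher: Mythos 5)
Your proof is correct and follows essentially the same route as the paper: the upper bound from $d_1\le d$, then a contradiction argument that uses the $2$-PR property to force $\Pi'$ to induce $\Op$, and the key step of showing no shortest $d_1$-path in $\Pi'$ can traverse the modified edge $(c_1,q)$ because otherwise the containing ball would hold points of two different optimal clusters, violating Observation~\ref{obs:uniquecluster}. The only cosmetic difference is in that key step: the paper observes directly that $c_1$ and $q$ are intermediate vertices of a shortest path from the ball center and hence both lie within the ball radius, while you reach the same conclusion via an explicit prefix/suffix decomposition plus a triangle step across the modified edge; both are sound, with the paper's phrasing being slightly more succinct.
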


\begin{proof}
  As for any $u,v \in V$, $d_1(u,v)\le d(u,v)$, the optimal dilation of $\I'$ is at most 1. We prove that this dilation is at least 1. Suppose the dilation is less than 1. 
 Let $\Pi'$ be any placement with dilation less than 1 that induces the clustering ${\Op}$ of $\I'$. Then, we show that there is a placement for $\I$ with dilation less than 1. But, this is a contradiction, and hence the claim follows. Consider any cluster $C\in \Op$ that gets covered by an $r_t$-ball $B=B(w,r)$ in $\Pi'$. Let $x$ be any point in $C$. Now, consider the distance $d_1$. Let $\pi$ be any shortest path between $w$ and $x$. We claim that $\pi$ cannot contain the edge $(c_1,q)$. For the sake of contradiction, say $\pi$ contains $(c_1,q)$. But, this implies $d_1(w,c_1)\le d_1(w,x)\le r$ and $d_1(w,q)\le d_1(w,x)\le r$. Thus, $B$ contains both $c_1$ and $q$. Now, by the first property of Observation \ref{obs:uniquecluster}, $c_1$ belongs to $C_1$. Thus, by the second property of Observation \ref{obs:uniquecluster}, we obtain a contradiction, as $q\in C_2$. Hence, $\pi$ does not contain $(c_1,q)$. It follows that $d_1(w,x)\ge d(w,x)$. Thus, the radius of the ball needed to cover the points of $C$ in $\I$ is at most $r$. Hence, it is sufficient to use an $r_t$-ball with at most $r/r_t < 1$ factor expansion to cover the points of $C$ in $\I$. Now, we construct a placement for $\I$ by selecting the same balls to cover the clusters that are used in $\Pi'$. Clearly, the dilation of this placement is less than 1. 
\end{proof}

Next, we show that there is a different clustering ${\Op}'$ of $\I'$ with exactly $k$ clusters that achieves the optimal dilation. This gives rise to a contradiction, and thus $d(p,q) > r_i$. Now, there are two cases. In the first case, 
$q$ is the only point in $C_2$, and thus $C_2\setminus \{q\}$ is empty. In this case, we pick a non-singleton cluster $C$ from ${\Op}\setminus \{C_1\}$ and choose a point $s\in C$. 
Such a cluster exists WLOG. Then, we define ${\Op}'$ to be the set of clusters in $\Op$ except $C,C_1$ and $C_2$, and the clusters $C_1\cup \{q\},\{s\}$ and $C\setminus \{s\}$. In the second case, $q$ is not the only point in $C_2$, and thus $C_2\setminus \{q\}$ is not empty. In this case, ${\Op}'$ is defined to be the set of clusters in $\Op$ except $C_1$ and $C_2$, and the clusters $C_1\cup \{q\},C_2\setminus \{q\}$. It is not hard to see that $C_1\cup \{q\}$ can be covered by the ball $B(c_1,r_i)$. Also, if $C_2\setminus \{q\}$ is not empty, then $B(c_2,r_j)$ covers the points in $C_2\setminus \{q\}$. Hence, in all the cases, it is trivial to verify that the dilation of the new clustering is 1. 
\end{proof}

Note that, in the above proof, to show that ${\Op}'$ has dilation 1, we argue that there is a placement with dilation 1. The balls in the placement might not be disjoint (both $B(c_1,r_i)$ and $B(c_2,r_j)$ cover $q$). But, for the sake of just showing the optimality of the clustering, it is sufficient to show the existence of such a placement.

\section{The Algorithm for CRMFC-T} \label{ap:treealgo}
In this section, we design a dynamic programming based algorithm that decides the feasibility of any instance of CRMFC-T. The algorithm runs in polynomial time when the height of the tree is a constant. Let $T$ be the input tree having height $t$, i.e., $T$ has $t+1$ levels $L_0,\ldots,L_t$. $L_0$ contains only the root of $T$. Let $n_i=|L_i|$. We also assume that the nodes of $L_i$ are ordered for all $i\ge 1$, i.e., $L_i=\{v_{i1},\ldots,v_{in_i}\}$. For $j\le l$, let $F(i,j,l)$ be the union of the induced subtrees of $T$ rooted at the vertices $v_{ij},\ldots,v_{il}$. We construct the tree $T(i,j,l)$ from $F(i,j,l)$ by connecting the roots of the subtrees to a common root. 

Let feasible($T(i,j,l),l_i,l_{i+1},\ldots,l_t$) be the function that decides if there is a feasible solution to CRMFC-T for the tree $T(i,j,l)$ by selecting at most $l_m$ nodes from level $m$, where $i\le m\le t$. Note that computing the function feasible($T=T(1,1,n_1),k_1,\ldots,k_t$) solves the CRMFC-T problem. We consider the following recursive definition of feasible(). In the base case, if $i=t-1$, the function can be computed in polynomial time. Otherwise, if $l_i$ is 0, let $j'$ be the minimum index such that $v_{i+1,j'}$ is a child of $v_{ij}$ and $l'$ be the maximum index such that $v_{i+1,l'}$ is a child of $v_{il}$. In this case, feasible($T(i,j,l),l_i,l_{i+1},\ldots,l_t$)=feasible($T(i+1,j',l'),l_{i+1},\ldots,l_t$). Otherwise, there must be a minimum index $j\le j^1\le l$ such that a yes node $v_{ij^1}$ is selected to be in the solution. For such a fixed $j<j^1<l$, let $j'$ be the minimum index such that $v_{i+1,j'}$ is a child of $v_{ij}$ and $l'$ be the maximum index such that $v_{i+1,l'}$ is a child of $v_{i,j^1-1}$. In this case, if there are values $l_{i+1}^1,\ldots,l_t^1,l_i^2,l_{i+1}^2,\ldots,l_t^2$ such that $l_i^2=l_i-1$, $l_m=l_m^1+l_m^2$ for all $i+1\le m\le t$, and both feasible($T(i+1,j',l'),l_{i+1}^1,\ldots,l_t^1$) and feasible($T(i,j^1+1,l),l_i^2,l_{i+1}^2,\ldots,l_t^2$) return yes, then feasible($T(i,j,l),l_i,l_{i+1},\ldots,l_t$) also returns yes. Otherwise if for all $j^1$ there are no such values, feasible($T(i,j,l),l_i,l_{i+1},\ldots,l_t$) returns no. The corner cases when $j^1=j$ or $j^1=l$ can be handled similarly. 

It is not hard to verify that feasible($T(i,j,l),l_i,l_{i+1},\ldots,l_t$) correctly decides whether there is a feasible solution or not for $T(i,j,l)$. To compute the feasible() function for all possible values one can use a simple dynamic programming based technique. In particular, one can store the values of the function for all possible parameters in a table. The table is filled up in a bottom-up manner, where the values corresponding to a level $j$ subtree is computed before computations of the values corresponding to a level $i$ subtree for $i < j$. It is not hard to see that the procedure would take polynomial time and space for a constant $t$.

\section{Reduction from 1-in-3SAT to RMFC-T}
Finbow {\em et al.}~\cite{FinbowKMR07} showed a reduction from Restricted NAE 3-SAT to RMFC-T. As per the definition of Restricted NAE 3-SAT, if the input instance has a feasible assignment, then it must at least have two. Thus, it cannot have a unique feasible solution. This is the reason behind our selection of the problem 1-in-3SAT, which can have a unique feasible solution. However, the reduction is motivated by the one in \cite{FinbowKMR07}. For consistency, we borrow some of their notations. 

Given an instance $I$ of 1-in-3SAT, we construct a rooted tree $T$ with root $r$ in multiple steps. Also, we choose the parameter $m=1$. Before discussing the reduction, we have a few definitions to set up the stage. Throughout this discussion, we will use the operation \emph{root} a copy of a rooted tree $(T,r)$ at a vertex $x$ of a graph $G$. This means we construct a new graph from the disjoint union of $G$ and $T$ by identifying $x$ and $r$. A vertex $v$ of a tree is said to be \emph{defended} by a vertex $u$ if the root to $v$ path contains $u$. For any path, we assume that its root is one of the degree one vertices. Also, the length of a path is defined as the number of edges contained in it. 

A \emph{ladder tree} ${\cl}^{T}(n)$ is a path having $2n+1$ vertices such that the middle vertex of the path is identified as the root of the tree. See Figure \ref{fig:trees}(i). Thus, the root of ${\cl}^{T}(n)$ has two branches each being a path of length $n$. A \emph{bell tree} ${\cb}^{T}(n,m)$ is formed by rooting a ladder tree ${\cl}^{T}(n-m)$ at an endpoint of a path having $m$ edges. The other endpoint of the path becomes the root of the bell tree. See Figure \ref{fig:trees}(ii). Thus, in the figure, the distance (in terms of edges) between $a$ and $b$ is $m$ and the distance between $a$ and a leaf is $n$. A \emph{snake tree} ${\cs}^{T}(n,m)$ is formed by rooting an $m-1$ length path at the root of a bell tree ${\cb}^{T}(n,m+1)$. The root of the bell tree (or the path) becomes the root of the snake tree. Note that a snake tree has exactly one degree 3 vertex. See Figure \ref{fig:trees}(iii). Thus, in the figure, the distance between $a$ and $b$ is $m$, and the length of the path between $a$ and a leaf such that the path contains $b$ is $n$. A rooted tree $T$ is called full if all leaves occur at the same level. A rooted tree $T$ is called complete if every internal vertex has exactly two children. One simple observation is that a complete and full binary tree of height $h\ge 0$ has $2^{h+1}-1$ vertices, and among those $2^h$ are leaves. 

\begin{figure}[t]
\centering
\includegraphics[width=.5\linewidth]{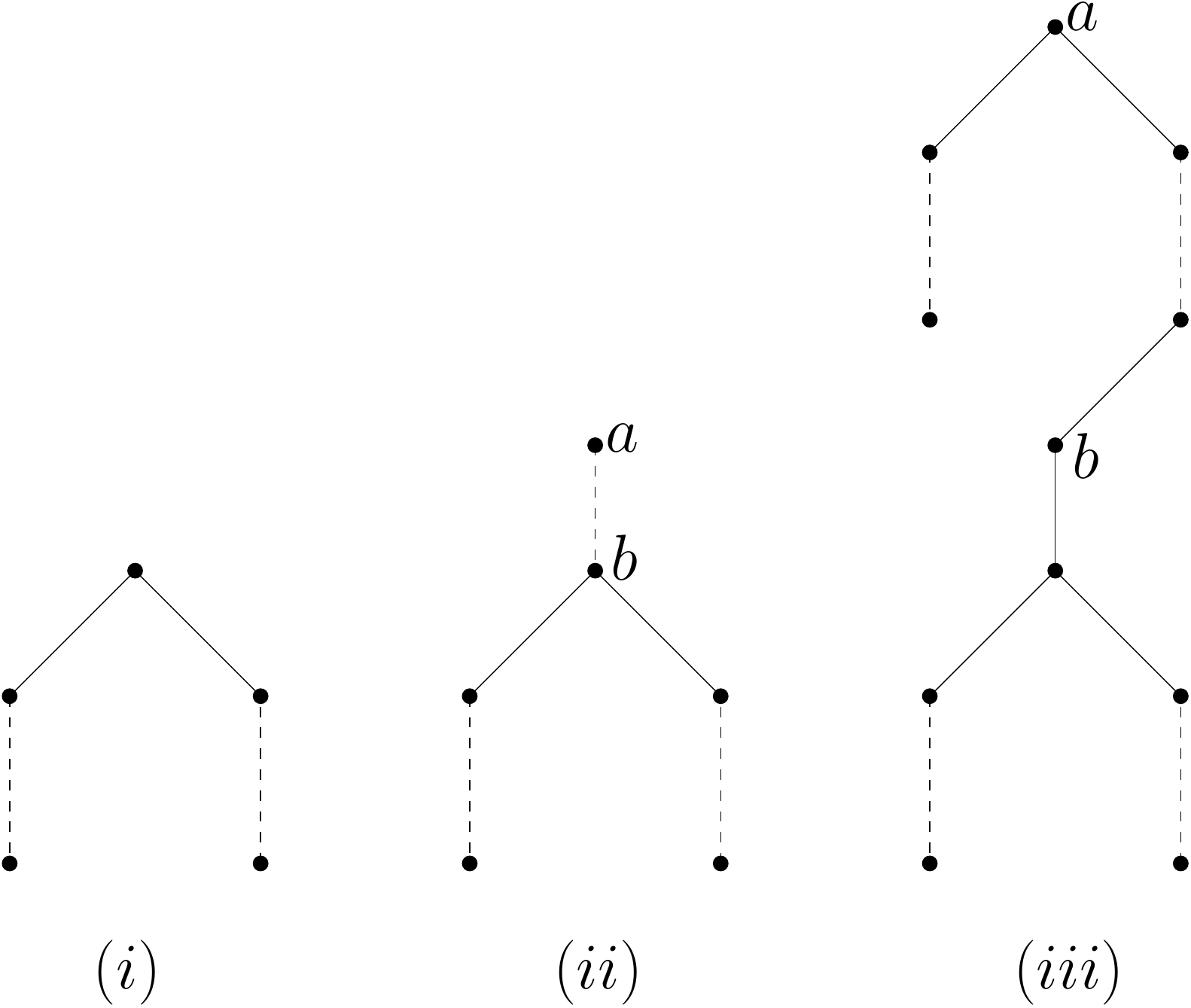}
\caption{(i) A ladder tree. (ii) A bell tree. (iii) A snake tree. Dashed segments denote paths.}
\label{fig:trees}
\end{figure}

Now, we describe the construction. We are given the 1-in-3SAT instance $I$=$(B,C)$ with the set of variables $B=\{b_1,\ldots,b_b\}$ and the set of clauses $C=\{C_1,\ldots,C_n\}$. Let $p=\lceil{\log n}\rceil+2$. Thus, $2^p\ge 4n$. We are going to construct a tree $T$ which is initialized to the root vertex $r$. For each $1\le i\le b$, root two paths of length $i$ at the root $r$ of $T$. Call the degree one vertices of these two paths $b_i$ and $\overline{b_i}$. Root a complete and full binary tree of height $p$ at $b_i$ and $\overline{b_i}$ for each $i$. From each leaf of these trees root a path of length $b-i$. Call the leaves of these paths $t_{b_i,1},\ldots ,t_{b_i,2^p}$ and $t_{\overline{b_i},1},\ldots ,t_{\overline{b_i},2^p}$. Note that all the leaf nodes are now at a distance $b+p$ from $r$. Root two paths of length $b+1$ at $r$, and call the degree one vertices of these paths $b_0$ and $\overline{b_0}$. So far the construction is exactly the same as the one in \cite{FinbowKMR07}. In the following, we modify their construction to adapt it for our setting. From $b_0$ and $\overline{b_0}$ root a complete and full binary tree of height $p$ and $p+1$, respectively, and call their leaves $t_{b_0,1},\ldots ,t_{b_0,2^{p}}$ and $t_{\overline{b_0},1},\ldots ,t_{\overline{b_0},2^{p+1}}$. This completes the first phase of the construction (see Figure \ref{fig:construction}). 

\begin{figure}[t]
\centering
\includegraphics[width=.6\linewidth]{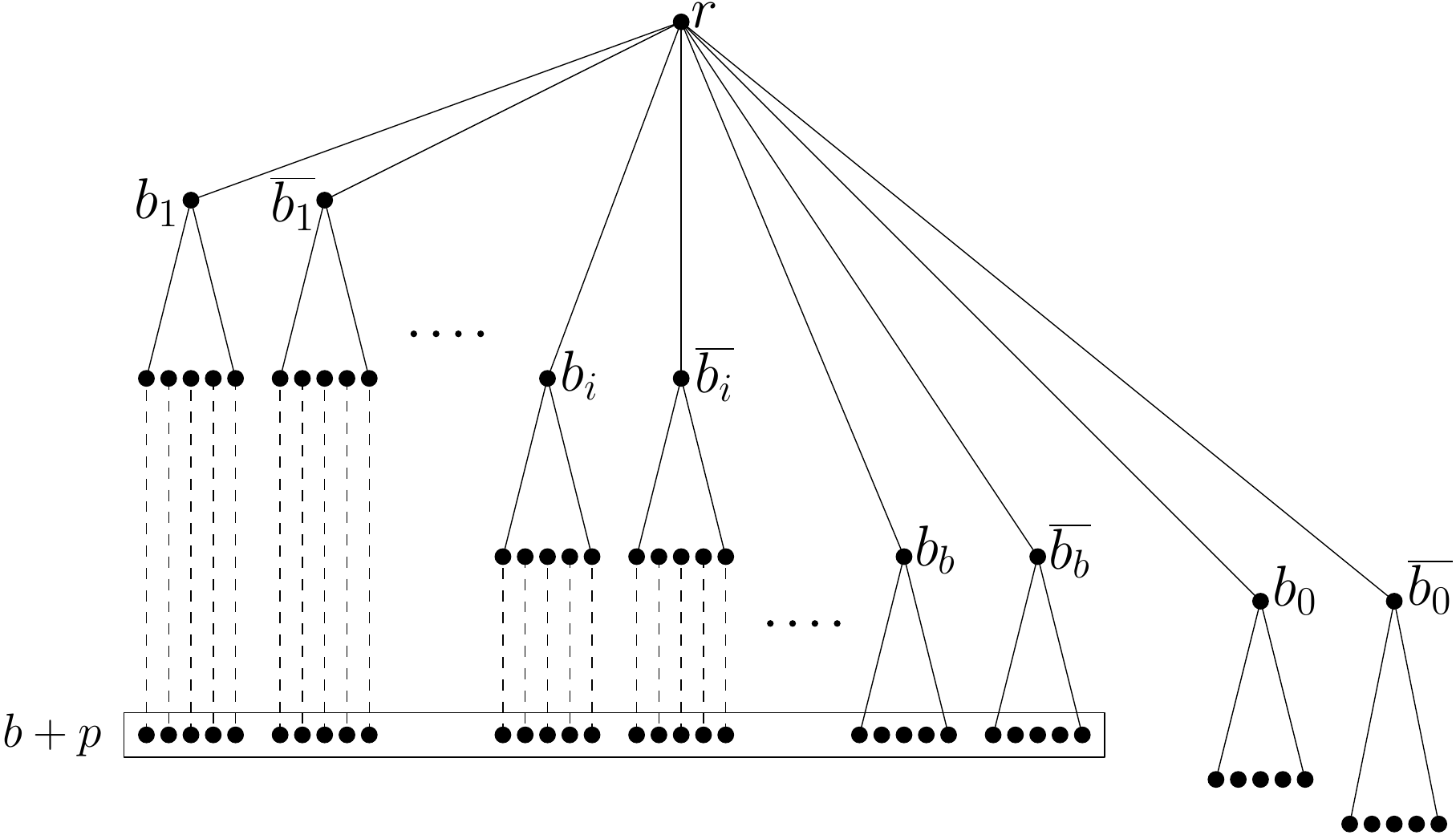}
\caption{Figure showing the constructed tree after the first phase.}
\label{fig:construction}
\end{figure}

In the second phase, we add clause gadgets by rooting special tree structures at the leaves of $T$ constructed so far. For each $1\le j\le n$, and for each literal $l$ of $C_j$, root the snake tree ${\cs}^T(4n+3,4j-2)$ at $t_{\overline{l},j}$. For $1\le \tau\le 3$, let $q$ be the ${\tau}^{th}$ literal of $C_j$. Root ${\cs}^T(4n+3,4j-2+\tau)$ at $t_{q,j}$. Also, root the bell tree ${\cb}^T(4n+3,4j+1)$ at $r$ for each $1\le j\le n$. For $1\le i\le 2^{p}$, add two children $x_i$ and $y_i$ of $t_{b_0,i}$. In each such added child, root the ladder tree  ${\cl}^T(4n+1)$. For $1\le i\le 2^{p+1}$, add two children $\overline{x_i}$ and $\overline{y_i}$ of $t_{\overline{b_0},i}$. In each such added child, root the ladder ${\cl}^T(4n)$. At each remaining leaf of $T$ (as mentioned before) where no tree structure has been rooted so far, root ${\cl}^T(4n+3)$. This completes the construction.     

Now, let us give an intuitive description of the clause gadgets. Note that our main goal is to defend all the leaves. Consider the clause $C_j=(b_1 \vee b_2\vee b_3)$. In a feasible solution, exactly one literal of $C_j$ must be true, say $b_1$. Now suppose in the solution of U-RMFC-T we select the vertices corresponding to true literals, i.e., $b_1$, $\overline{b_2}$ and $\overline{b_3}$. Note that we have added one snake tree corresponding to each complemented literal of $C_j$. Thus, all the vertices in the snake trees corresponding to $\overline{b_2}$ and $\overline{b_3}$ are already defended. In this case, we can defend the degree three vertex (and all of its descendants) of the snake tree corresponding to $\overline{b_1}$ by choosing the degree three vertex itself. If more than one literal are true, then we need to defend vertices of at least two snake trees instead for which we would have to pick more than one vertices from a level. Now, we have also added three other snake trees one for each literal of $C_j$. As the snake tree corresponding to $b_1$ is already defended by $b_1$, we just need to defend the leaves of the remaining two. We can defend them by selecting the parent of the degree three vertex from the corresponding snake tree. In this way, we can also defend the last added bell tree by selecting its degree three vertex  (see Figure \ref{fig:clauseg}). The alignments of these degree three vertices and their parents help us pick them in different levels. Note that if none of the literals are true, then we would need to defend the leaves of the three snake trees corresponding to the literals and in that case it is not possible to defend the leaves of the bell tree corresponding to $C_j$. 

\begin{figure}[t]
\centering
\includegraphics[width=.6\linewidth]{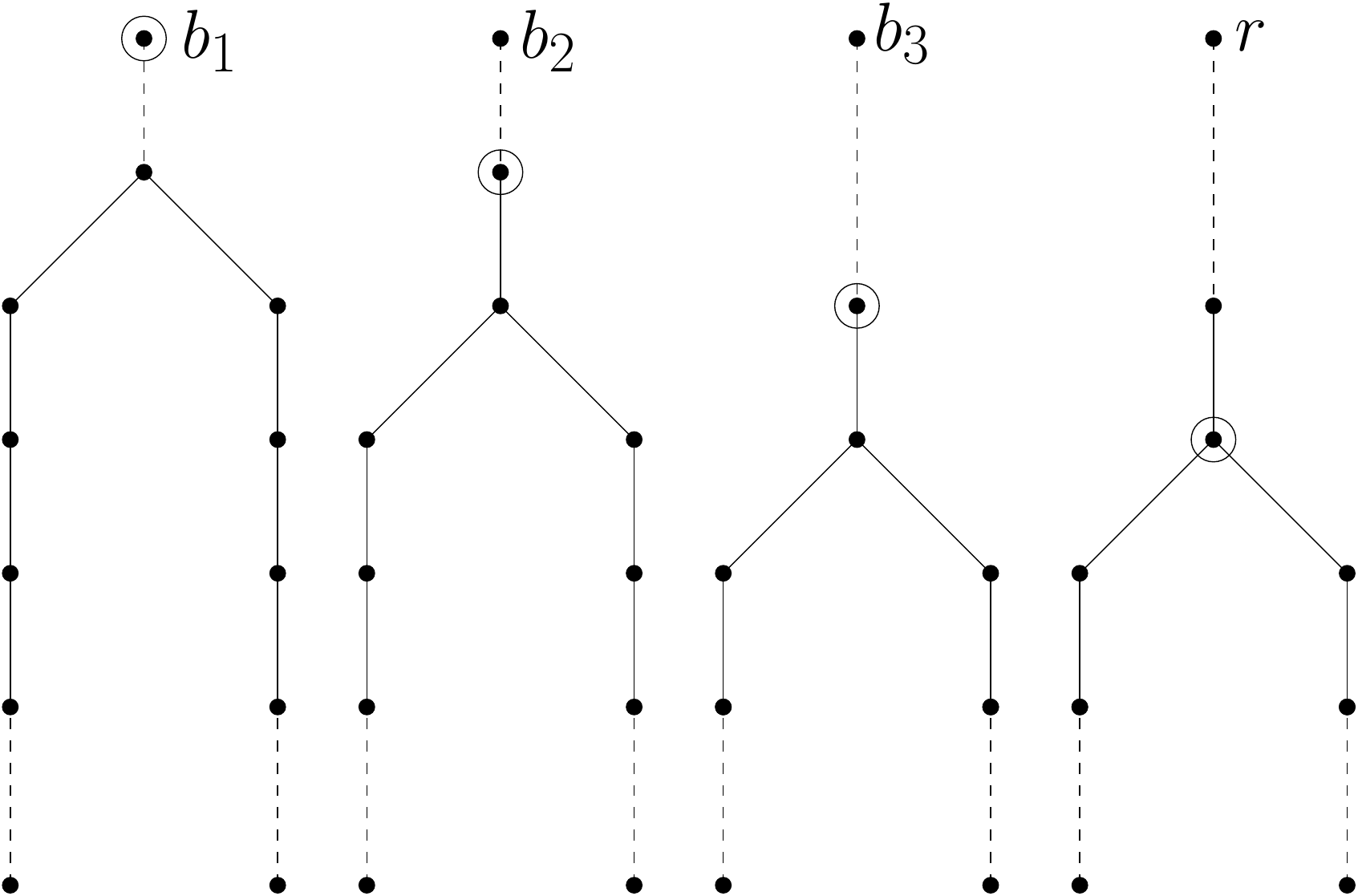}
\caption{Figure showing parts of the three snake trees and the bell tree corresponding to the literals of $C_j$. The circled vertices are selected in the solution.}
\label{fig:clauseg}
\end{figure}

The argument behind the correctness of the reduction is similar to the one in \cite{FinbowKMR07}. The forward direction is simple. First, defend the vertices corresponding to true literals, i.e., if $b_i$ (resp. $\overline{b_i}$) is true, defend $b_i$ (resp. $\overline{b_i}$) at time $i$ for $1\le i\le b$. At time $b+1$, defend $\overline{b_0}$. From time $b+2$ to $b+p+1$, defend the unprotected descendant of $b_0$ which is not on the path from $r$ to $x_1$. At time $b+p+2$, defend $x_1$. From time $b+p+3$ to $b+p+4n+3$, defend the tree greedily by picking a vertex at each level that contains the maximum number of nodes in the subtree rooted at it. The other direction is nontrivial, but similar counting arguments as in \cite{FinbowKMR07} should be used for the proof. It follows that the 1-in-3SAT formula is satisfiable if and only if all the leaves of $T$ can be defended by selecting exactly one vertex from each level.    

Finally, we show that all the feasible solutions are pairwise equivalent as claimed. This actually follows from the construction. Fix the unique feasible assignment to the 1-in-3SAT formula. Then, while finding a feasible solution for U-RMFC-T from the assignment in the above, in all the steps one need to select a unique vertex except when one needs to choose the parent of the degree three vertices of ${\cs}^T(4n+3,4j+1)$ and ${\cb}^T(4n+3,4j+1)$ both of which lie at the same level. However, irrespective of the selection, the set of leaves remains same. Thus, even though the solutions are different, the corresponding sets of leaves are same, and hence the solutions are pairwise equivalent. 

\section{Proof of Observation \ref{obs:ances}}

\begin{proof}
\begin{align*}
 d(u,u') & \le d(u,v)+d(v,u')
 \\& = ((\gamma+1)/2+((\gamma+1)^{2}-(\gamma+1))/2+\ldots+((\gamma+1)^{h-j}-(\gamma+1)^{h-j-1})/2)+
 \\& \qquad((\gamma+1)/2+((\gamma+1)^{2}-(\gamma+1))/2+\ldots+((\gamma+1)^{h-j}-(\gamma+1)^{h-j-1})/2)
 \\& =r_j.
\end{align*}

\end{proof}

\section{Proof of Lemma \ref{lem:gammahard}}
\begin{proof}
 Let $T$ be a ``YES'' instance and $N$ be a solution for $T$. We construct a solution for $I$ from $N$ as follows. For any $v\in N$, let $j$ be the integer such that $v\in L_j$. We select a leaf $u$ from the subtree rooted at $v$ and place a ball of radius $r_j$. We note that at most 1 ball of radius $r_i$ is selected for all $i$, as $|N\cap L_i|\le 1$. Now consider any point $w\in P$. Then there must be a node $v$ in $N$ along the path between $w$ and the root. Let $v\in L_j$. Now the way we place the balls there must be a leaf $u$ in the subtree rooted at $v$ such that a ball of radius $r_j$ is opened at $u$. As $v$ is a common ancestor of $u$ and $w$, from Observation \ref{obs:ances}, it follows that $d(u,w)\le r_j$. Hence the ball $B(u,r_j)$ covers $w$.  
 
 Now let $T$ be a ``NO'' instance and the optimum dilation of $I$ be at most $\gamma$. Consider such a solution $S$ corresponding to the instance $I$. We construct a solution $N$ for U-RMFC-T on $T$ using $S$ as follows. For any $1\le j\le t$, let $u$ be the point where the ball (of radius at most $\gamma r_j$) corresponding to $r_j$ is placed. Let $v$ be the ancestor of $u$ that is in $L_j$. We add $v$ to $N$. Note that, as $S$ contains only one ball corresponding to the value $r_i$, $|N\cap L_i|\le 1$ for all $i$. Now consider any leaf $w$. We show that $N$ contains a node along the $w$-root path. Let $B$ be a ball in $S$ that covers $w$. Also let $B$ be corresponding to the value $r_j$ and is centered at the point $u$. Suppose $v$ is the ancestor of $u$ that is in $L_j$. As the radius of the ball at $u$ is at most $\gamma r_j< r_{j-1}$, a point that is not contained in the subtree rooted at $v$ cannot be covered by $B$. Hence $w$ must be contained in the subtree rooted at $v$ and thus $w$-root path contains $v\in N$. But this implies that $N$ is a solution for $T$ corresponding to the ``YES'' case and thus $T$ must be a ``YES'' instance. But this is a contradiction and thus the optimum dilation of $I$ must be more than $\gamma$. 
 
 As the feasible solutions for $T$ are pairwise equivalent, it follows due to argument above that these feasible solutions get mapped to a unique optimal clustering of dilation 1. Similarly, the unique optimal clustering of dilation 1 gets mapped to a feasible solution of $T$. It follows that $I$ has a unique optimal clustering.  
\end{proof}

\section{Hardness in Euclidean Metric}
Let $X$ and $Y$ be two finite metric spaces with metrics $d$ and $d'$, respectively. Let $f:X\rightarrow Y$ be a map. Then, the \textit{contraction} of $f$ is defined as, \[D_c(f)=\max_{x,y\in X} \frac{d(x,y)}{d'(f(x),f(y))}.\] The \textit{expansion} of $f$ is similarly defined as, \[D_e(f)=\max_{x,y\in X} \frac{d'(f(x),f(y))}{d(x,y)}.\] The \textit{distortion} of $f$, $D(f)=D_c(f)\cdot D_e(f)$. We need Theorem \ref{th:embed} due to Gupta \cite{Gupta00} for proving the hardness result. Next, we prove Theorem \ref{th:nukchardeuclid}. 

\begin{proof}
 Suppose there is a polynomial-time $\beta$-approximation for NUkC under $\beta$-PR in the Euclidean metric for any constant $\kappa$ and any $\beta\le {\kappa}^{n^{\kappa}}$. Then, we show that there is a polynomial-time $\gamma$-approximation for NUkC under $\gamma$-PR in tree metrics for any $\gamma\le c^{n^c}$, where $c$ is a constant. But, by Theorem \ref{th:nukchardpr} this is a contradiction, and hence the proof of the theorem follows. 
 
 Now, consider a constant $c$ and any $\gamma \le c^{n^c}$. Also, consider any instance of NUkC under $\gamma$-PR in the tree metric induced by the weighted tree $T$. We show how to get a $\gamma$-approximate solution for $T$ using the approximation algorithm for the Euclidean metric. Let $\Delta=O(dn^{1/(d-1)}\log n)$.
 First, we embed the tree $T$ into $d$-dimensional Euclidean space $\mathbb{R}^d$ using the algorithm of Theorem \ref{th:embed}. Let $f: T \rightarrow \mathbb{R}^d$ be the embedding. Also, let $d$ and $d_f$ denote the tree and the Euclidean metric, respectively. 
 We fix $\beta$ such that $\beta \le \gamma/\Delta$,  
 and compute a $\beta$-approximate solution $S$ of NUkC under $\beta$-PR for the Euclidean instance. Thereafter, we construct a solution $S'$ for the problem on $T$ from the solution $S$ in the following way. For any node $x$ of $T$, if $S$ contains a ball centered at $f(x)$ with radius $r$, then we add the ball at $x$ of radius $D_c(f)\cdot r$ to $S'$, where $D_c(f)$ is the contraction of $f$. First, we show that the solution $S'$ constructed in this way covers all the nodes of $T$. Consider any node $x$ of $T$. Then, there is a ball in $S$ centered at some point $f(y)$ that covers $f(x)$. Let $r$ be the radius of this ball. It follows that $S'$ contains the ball $B$ centered at $y$ having radius $D_c(f)\cdot r$. Now, \[d(x,y)\le D_c(f)\cdot  d_f(f(x),f(y))\le D_c(f) \cdot r.\] Hence, the ball $B$ contains $x$, and thus $S'$ is a feasible solution. Now, we show that the dilation $\alpha(S')$ of the balls in $S'$ is at most $\gamma$ times the optimum dilation. To this end, let $OPT$ and $OPT_f$ be the optimum dilation for the tree and the Euclidean instance, respectively. Then, the dilation $\alpha(S')$ is at most $\beta \cdot OPT_f\cdot D_c(f)$. Now, as the distances between the points can get expanded by a factor of at most $D_e(f)$ due to the embedding, $OPT_f \le D_e(f)\cdot OPT$. Here $D_e(f)$ is the expansion of $f$. Hence, \[\alpha(S') \le \beta \cdot D_e(f)\cdot OPT\cdot D_c(f)= \beta\cdot D(f) \cdot OPT\le \beta\cdot \Delta\cdot OPT\le \gamma \cdot OPT.\] This completes the proof of the theorem. 
\end{proof}

\end{document}